\documentclass[11pt]{article}

\usepackage[
  letterpaper,
  margin=1in
]{geometry}

\usepackage[utf8]{inputenc}
\usepackage[T1]{fontenc}

\usepackage{lmodern}
\usepackage{microtype}

\usepackage{amsmath}
\usepackage{amsthm}
\usepackage{amssymb}
\usepackage{mathtools}
\usepackage{thmtools}
\usepackage{nicefrac}

\usepackage{graphicx}
\usepackage{booktabs}
\usepackage{tabularx}
\usepackage[table]{xcolor}
\usepackage{enumitem}
\usepackage{xifthen}

\usepackage[
  ruled,
  vlined,
  linesnumbered
]{algorithm2e}

\usepackage{algpseudocode}

\usepackage[numbers,sort&compress]{natbib}

\usepackage{url}
\usepackage{hyperref}

\hypersetup{
  colorlinks=true,
  linkcolor=blue!45!black,
  citecolor=green!35!black,
  urlcolor=blue!55!black,
  pdfauthor={},
  pdftitle={Attribute-based Undetectable Watermarking for Generative AI Models},
  pdfsubject={Cryptographic watermarking for generative AI},
  pdfkeywords={
    generative AI,
    watermarking,
    attribute-based watermarking,
    constrained pseudorandom functions
  }
}

\usepackage[capitalise]{cleveref}

\usepackage{multicol}
\newtheorem{theorem}{Theorem}

\newtheorem{lemma}[theorem]{Lemma}

\newtheorem{protoproposition}[theorem]{Proposition}

\newtheorem{protoquestion}{Question}

\theoremstyle{definition}

\newtheorem{definition}[theorem]{Definition}

\newtheorem{construction}{Construction}

\newcommand{\abs}[1]{\left\vert#1\right\vert}

\newcommand{\set}[1]{\left\{#1\right\}}
 \newcommand{\eps}{\varepsilon}

\newcommand{\wt}{\widetilde}

\newcommand{\zo}{\set{0,1}}

\newcommand{\bbN}{\mathbb{N}}

\newcommand{\cE}{\mathcal{E}}
\newcommand{\cC}{\mathcal{C}}
\newcommand{\cY}{\mathcal{Y}}

\newcommand{\cD}{\mathcal{D}}

\newcommand{\cF}{\mathcal{F}}

\newcommand{\cT}{\mathcal{T}}
\newcommand{\cB}{\mathcal{B}}

\newcommand{\cX}{\mathcal{X}}

\newcommand{\cW}{\mathcal{W}}
\newcommand{\cS}{\mathcal{S}}
\newcommand{\cP}{\mathcal{P}}

\newcommand{\adv}{\mathcal{A}}

\newcommand{\secpara}{\kappa}

\renewcommand{\Pr}{\operatorname*{\mathbf{Pr}}}
\DeclareMathOperator*{\E}{\mathbf{E}}

\newcommand{\pr}[2][]{ \ifthenelse{\isempty{#1}}
  {\Pr\left[#2\right]} {\Pr_{#1}\left[#2\right]} }
\newcommand{\ex}[2][]{ \ifthenelse{\isempty{#1}}
  {\E\left[#2\right]}
  {\E_{#1}\left[#2\right]} }

\newcommand{\poly}{\mathsf{poly}}
\newcommand{\negl}{\mathsf{negl}}

\newcommand{\ie} {i.e.,\ }
\newcommand{\eg} {e.g.,\ }

\newcommand{\sk}{\mathsf{sk}}

\newcommand{\msk}{\mathsf{msk}}
\newcommand{\csk}{\mathsf{csk}}
\newcommand{\dk}{\mathsf{dk}}
\newcommand{\pp}{\mathsf{pp}}

\newcommand{\prompt}{\pi}
\newcommand{\Setup}{\mathbf{Setup}}

\newcommand{\Constrain}{\mathbf{Constrain}}
\newcommand{\Generate}{\mathbf{Generate}}
\newcommand{\Issue}{\mathbf{Issue}}
\newcommand{\Detect}{\mathbf{Detect}}
\newcommand{\MasterDetect}{\mathbf{MasterDetect}}
\newcommand{\Gen}{\mathbf{Gen}}
\newcommand{\Eval}{\mathbf{Eval}}
\newcommand{\constrain}{\mathbf{Constrain}}
\newcommand{\CEval}{\mathbf{CEval}}
\newcommand{\Encode}{\mathbf{Encode}}
\newcommand{\Decode}{\mathbf{Decode}}

\newcommand{\PRC}{\mathbf{PRC}}
\newcommand{\CPRF}{\mathbf{CPRF}}
\newcommand{\PRG}{\mathbf{PRG}}

\newcommand{\Enc}{\mathbf{Enc}}
\newcommand{\Dec}{\mathbf{Dec}}

\newcommand{\dist}{\mathsf{dist}}

\newcommand{\out}{out}
\newcommand{\Model}{\mathbf{Model}}
\newcommand{\RandRecover}{\mathbf{RandRecover}}
\newcommand{\Classifier}{\mathbf{Classifier}}

\newcommand{\frakG}{\mathfrak{G}}
\renewcommand{\Game}{\frakG}

\title{\textbf{Attribute-based Undetectable Watermarking for Generative AI Models}}

\author{
  Miryam Mi-Ying Huang \thanks{Part of the work was done when Miryam is at University of Southern California}\\
  Carnegie Mellon University\\
  Pittsburgh, PA, USA\\
  \texttt{miyingh@andrew.cmu.edu}
  \and
  Chung-Wei Lee\\
  Independent Researcher\\
  \texttt{lee@chungwei.net}
  \and
  Max Raffel\\
  University of Southern California\\
  Los Angeles, CA, USA\\
  \texttt{mraffel@usc.edu}
  \and
  Er-Cheng Tang\\
  University of Washington\\
  Seattle, WA, USA\\
  \texttt{erchtang@uw.edu}
}

\date{}

\begin{document}

\maketitle

\begin{abstract}

Generative AI systems increasingly produce content whose provenance is difficult to verify, motivating watermarking techniques for identifying model-generated outputs. Existing cryptographic watermarking methods provide strong undetectability guarantees: without a detection key, watermarked outputs are computationally indistinguishable from unwatermarked ones. However, these approaches do not address the crucial deployment challenge of how to safely delegate detection capabilities. With an unrestricted detection key, a malicious detector may use the detection key beyond its intended scope, enabling watermark sanitization, scope abuse, and user profiling.

To mitigate this safety concern, we introduce, to the best of our knowledge, the first \emph{attribute-based watermarking} for generative AI models, providing fine-grained, policy-controlled watermark detection. In our approach, each generated output is associated with attributes, and each detection key is \emph{constrained by a policy} on potential attributes. A detection key can only be used to detect watermarked outputs whose attributes satisfy the corresponding policy, while watermarked outputs that fall outside the policy remain computationally indistinguishable from unwatermarked ones. We construct such an attribute-based watermarking scheme and formalize its security properties, including consistency, adaptive robustness to bounded corruptions, undetectability, and soundness, along with a security proof under standard cryptographic assumptions. Our construction integrates constrained pseudorandom functions, pseudorandom error-correcting codes, and randomness recovery procedures with generative AI models. Finally, we implement a prototype and an empirical evaluation, demonstrating that attribute-based watermarking is both effective and practical.

\end{abstract}

\section{Introduction}
\label{sec:introduction}

Generative artificial intelligence systems now produce text, images, audio, and video that are increasingly difficult to distinguish from human-created content. Large language models (LLMs) achieve high levels of fluency and instruction following, while generative vision models produce realistic synthetic media at scale. These capabilities create substantial societal and economic value, but they also make content provenance a pressing technical and policy challenge. Governments, platforms, academic institutions, and model providers increasingly need mechanisms to identify synthetic content, authenticate digital media, and enforce AI-use policies.

Recent policy efforts have explicitly identified watermarking, content labeling, and provenance tracking as tools for reducing the risks of synthetic content~\cite{executiveorder2023ai,nistAISIC2024,nistSyntheticContent2024}. In particular, the U.S. AI Safety Institute Consortium lists watermarking synthetic content among its priority activities, while NIST's synthetic-content research agenda highlights unresolved challenges concerning watermark robustness and security gaps in content-authentication ecosystems~\cite{nistAISIC2024,nistSyntheticCall2024}. These efforts underscore both the practical importance of watermarking and the need for stronger, deployable security guarantees.

At the same time, real-world institutions have begun deploying ad hoc mechanisms to detect improper AI use. For example, ICML 2026 used hidden instructions embedded in submitted PDFs to detect violations of its LLM-use policy for peer review~\cite{icml2026llmreview}. This approach identified hundreds of violations, but the organizers also noted that it was easy to circumvent once known: a reviewer could remove the hidden instruction by sanitizing the PDF, extracting the visible text, or otherwise changing the document-processing pipeline before using an LLM. This example highlights the fragility of non-cryptographic provenance tracking. Prompt-injection-based ``honeypots'' can detect careless misuse, but they do not provide a durable provenance mechanism. More generally, ad hoc approaches do not provide formal guarantees regarding robustness, authorization, or privacy.

Watermarking has emerged as a leading technical approach for provenance tracking in generative AI. A watermarking scheme embeds a statistically verifiable signal into generated outputs so that an authorized party can later determine whether an output was produced by the model. Early watermarking schemes for language models modify the sampling procedure by intentionally perturbing the model's output distribution~\cite{kirchenbauer2023watermark}, but this approach creates a tension between detectability, robustness, and preservation of generation quality~\cite{christ2024undetectable}.

Recent cryptographic work addresses this tension through undetectable watermarking~\cite{christ2024undetectable}. In an undetectable watermarking scheme, watermarked and unwatermarked outputs are computationally indistinguishable to any efficient party that does not hold the relevant detection key, even under adaptive querying~\cite{christ2024undetectable}. This property is stronger than preserving any particular quality metric: it implies that no efficient test can distinguish the two distributions without the key. Undetectable watermarking thus provides a principled way to support detection while avoiding observable degradation of generated content. Subsequent works improve undetectable watermarks in terms of the tolerable error rate \cite{christ2024pseudorandom} and the attack model in which the watermarks may remain effective to attackers who have access to either a detection oracle~\cite{alrabiah2025ideal} or a public detection key~\cite{christ2025improved}.

However, undetectability alone does not solve the authorization problem that arises when detection capability must be delegated. In many realistic deployments, a model provider should not hand out an all-purpose detector. Different detectors may be authorized to inspect different classes of outputs. Existing watermarking schemes largely treat detection as coarse-grained: once a detector receives detection capability, the scheme itself does not enforce fine-grained restrictions on which outputs the detector may test.

Consider a conference that wishes to delegate watermark detection to its area chairs. If detection is publicly available, an author could repeatedly edit an AI-generated submission and query the detector until detection fails, thereby using the detector as a sanitization oracle. A conventional secret-key watermark prevents such public access, but does not natively enforce fine-grained delegation: an area chair who receives a conference-wide detection key may also be able to test submissions outside
the chair's assigned subject area.

This lack of policy-scoped detection creates several concrete vulnerabilities.

\begin{itemize}[leftmargin=*]
    \item \textbf{Watermark sanitization.}
    If the detection key is broadly accessible, an adversary may use the detection key to sanitize the watermark. For example, a malicious influencer generating AI-written medical, financial, or political content could repeatedly edit an output and apply detection until the watermark is no longer detected.

    \item \textbf{Scope abuse.}
    A detector authorized in one setting may apply detection outside that setting. For example, an enterprise compliance team may be authorized to verify whether externally released reports were generated by an AI system, but an unrestricted detector could also be applied to internal employee documents, legal drafts, or confidential strategy memos. The detector would then obtain provenance information outside the scope of its delegated authority.

    \item \textbf{User profiling.}
    Broad detection capability can leak behavioral information. If arbitrary parties can scan a user's public posts for watermarks, they may infer which topics the user relies on AI assistance for, enabling targeted phishing, manipulation, or reputational attacks.
\end{itemize}

These examples suggest a common failure mode: existing detection mechanisms only authenticate \emph{whether} content is watermarked, but do not cryptographically enforce \emph{who} may test it, \emph{which} content they may test, or \emph{under what policy}. Thus, watermarking should provide not only undetectability, but also cryptographic access control for detection. We introduce \emph{attribute-based watermarking} for generative AI. In our framework, each generated output is associated with an attribute, such as a topic, domain, course identifier, policy category, or other application-defined label. A model provider can issue a detection key tied to a policy over the attributes. The key enables detection only for outputs whose attributes satisfy the policy, while preserving undetectability outside the authorized scope. Thus, verification becomes fine-grained and policy-controlled rather than all-or-nothing.

In the conference example, each generated output is associated with a subject-area attribute, and each area chair receives a detection key constrained to the policy for the chair's assigned area. Authors and ordinary reviewers receive no detection capability, while an area chair's key reveals no useful information about watermarks on out-of-area submissions. This confines the consequences of key leakage or misuse to the delegated scope. It
does not prevent an authorized area chair from misusing detection on an in-scope submission; as in conventional access-control systems, such trusted-role misuse lies outside the goal of policy-scoped delegation.

At a technical level, our construction binds the randomness used for watermark generation to the output attribute through a constrained pseudorandom function. The master secret key allows the model provider to generate watermarked outputs for arbitrary attributes. A delegated detection key corresponds to a policy and allows evaluation of the relevant pseudorandom value only on authorized attributes. We combine constrained pseudorandom functions with pseudorandom error-correcting codes and randomness recovery procedures, following the cryptographic watermarking paradigm for randomized generative models. This yields a scheme in which authorized detection remains robust to small perturbations, whereas unauthorized detection reveals no useful information about the watermark.
Our contributions are as follows.

\begin{itemize}[leftmargin=*]
    \item \textbf{The first attribute-based verification framework for generative AI.}
    We introduce a modular framework for watermarking schemes with policy-scoped detection. The framework separates master generation authority from delegated detection authority while modeling detection keys as constrained by policies over output attributes. Since the framework is modular rather than tied to a fixed instantiation, advances in any underlying primitive immediately translate to improvements in the resulting watermarking scheme.

    \item \textbf{Security notions for delegated watermark detection.}
    We define consistency for authorized detection, adaptive robustness against bounded modifications of watermarked outputs, undetectability outside the authorized policy scope, and soundness against adversaries that do not hold the master secret key. These notions capture the distinction between detecting watermarks within an authorized domain and learning useful information about them outside that domain.

    \item \textbf{A cryptographic construction from constrained pseudorandom functions and pseudorandom codes.}
    We give a construction that combines constrained pseudorandom functions, pseudorandom error-correcting codes, pseudorandom generators, and randomness recovery for generative models. The constrained pseudorandom function enforces policy-scoped access to the pseudorandom code, which provides a robust and undetectable watermark. 

    \item \textbf{Security proofs of our watermarking scheme.}
    We prove that the construction satisfies the proposed security properties under standard assumptions on the underlying primitives, including correctness and security of the constrained pseudorandom function, pseudorandomness and robustness of the pseudorandom code, and compatibility of the model's randomness recovery procedure with the tolerated output perturbations.
    \item \textbf{Prototype implementation and empirical evaluation.}
    We implement a prototype of attribute-based watermarking for generative AI models and evaluate its behavior. Our implementation includes master-key generation, constrained detection-key issuance, watermark generation, and authorized detection. Empirically, we validate selective detectability under authorized policies and measure the overhead introduced by the cryptographic components.
\end{itemize}

Overall, attribute-based watermarking provides a cryptographic foundation for controlled provenance tracking in generative AI. It preserves the central benefit of undetectable watermarking, namely that watermarked outputs remain indistinguishable from ordinary model outputs to unauthorized parties, while adding a missing access-control layer: detection can be delegated according to explicit policies over the attributes of generated content.

\section{Preliminary}
\label{sec:prelim}

This section provides abstractions for generative AI models and classifiers, and reviews relevant cryptographic primitives that will be used in our watermarking construction.

The following notions will be used. We denote by $\negl(n)$ an unspecified function $f: \bbN \to \mathbb{R}_{\ge 0}$ that decays asymptotically faster than any inverse polynomial, or equivalently, a function with $f(n) = o(\frac{1}{n^c})$ for every $c>0$. For two strings $u, v \in \Sigma^n$ over alphabet $\Sigma$, we define their relative Hamming distance \(\dist(u,v) := \tfrac{|\set{i : u_i \neq v_i}|}{n}\) as the fraction of coordinates on which the two strings differ. We also define $\dist(u,v)=\infty$ when $u,v$ have different length. For a set $S$, we define $\dist(u,S) := \min_{v \in S} \dist(u,v)$. Given a randomized algorithm $R$, we denote its execution on an input $x$ either by $R(x)$, which uses an internal randomness, or $R(x; r)$, where $r$ is the randomness written out explicitly.

\subsection{Generative AI Models}
\label{sec:model}
Generative AI systems, such as large language models, are inherently randomized: when run on the same prompt multiple times, the model may produce different outputs based on internally sampled randomness. 
We abstract such systems as randomized algorithms that map prompts to outputs.
\begin{definition}
\label{def:AI-model}
    A generative AI model is a randomized algorithm $\Model$ that takes as input a prompt $\prompt$ and generates $\Model(\prompt; r) \to \out \in \Sigma^n$ an output of length $n$, where $r$ denotes randomness used by the algorithm.
\end{definition}

Recent works \cite{christ2024pseudorandom,gunn2025an} observed that the randomness used by generative AI models can often be approximately recovered from the generated output itself.
Intuitively, the generation process preserves substantial information about the underlying latent randomness, and the generated output exhibits a high correlation with the randomness used, especially when the output has sufficiently high entropy. Consequently, one can approximately recover a noisy version of the original randomness. Moreover, this approximate recovery can remain possible even if the output is slightly modified or corrupted. We refer the readers to \cite{christ2024pseudorandom,gunn2025an} for more information on how randomness recovery is achieved.

\begin{definition}
\label{def:randomness-recovery}
    A generative AI model is said to admit robust randomness recovery with constant error rate $\delta$ if there is an efficient deterministic algorithm $\RandRecover$ that takes as input a (potentially corrupted) generated output $\out' \in \Sigma^n$ and produces a recovered randomness $r' \in \zo^n$ that is close to the truncated original randomness $r_{[1:n]} \in \zo^n$ in relative Hamming distance. That is, for any error channel $\cE$,
    \[\Pr_{r}\left[ \dist(\out',\out) < \delta \implies \dist(r', r_{[1:n]}) < 2\delta \;\middle|\; \begin{matrix}
        \out\gets \Model(\prompt;r)\\
        \out'\gets \cE(\out)\\
        r' \gets \RandRecover(\out')
    \end{matrix}\right] = 1 - \negl(n)\]
    as long as the output distribution $\Model(\prompt)$ has sufficiently high entropy on prompt $\prompt$. For convenience, throughout this work, we will only consider prompts whose output distribution has sufficiently high entropy.
\end{definition}

Next, we introduce the role of classifiers in our framework. Given a generative AI model, we use a classifier to assign attributes to generated outputs. A key requirement is that these attribute assignments remain stable across multiple samples generated from the same prompt. In particular, we expect that two independently generated outputs under the same prompt are likely to be assigned the same attribute. For generative AI models, such consistency may not hold automatically, but can often be achieved empirically, for example by applying strategies such as chain-of-thought that encourage more stable and semantically consistent outputs.
Generated outputs are also expected to retain stable attributes even under small perturbations. This requirement is closely related to adversarial robustness in machine learning, where a classifier is expected to produce consistent outputs under small adversarial modifications. We formalize these requirements below via notions of adversarial robustness and consistency with respect to the generative model.

\begin{definition}
    A classifier is an algorithm $\Classifier(m) \to x$ that maps a message $m \in \Sigma^n$ to a label $x \in \zo^{n'}$. 
    A classifier is $\eta$-consistent on a generative AI model $\Model$ if for every prompt $\prompt$,
    \[\Pr_{m,m' \gets \Model(\prompt)}\left[ \Classifier(m') = \Classifier(m) \right] \ge \eta.\]
    A classifier is $(\delta,\eps)$-adversarially robust over a distribution $\cD$ if for every probabilistic polynomial-time adversary $\adv$,
    \[\Pr_{m \gets \cD,\; m' \gets \adv(m)} \left[ \Classifier(m') \neq \Classifier(m) \;\land\; \dist(m',m) < \delta \right] \le \eps.\]
\end{definition}
If the classifier is adversarially robust over the distribution $\Model(\prompt)$ for every $\prompt$, we simply say that $\Classifier$ is an adversarially robust output classifier for $\Model$.
In our analysis, we consider $\delta$ being a small constant, $\eps = \negl(n)$, and $\eta = 1 - \negl(n)$.

\subsection{Pseudorandom Error-Correcting Codes}
Traditional error-correcting codes are designed to recover messages from noisy transmissions. 
Pseudorandom codes (PRC) additionally require the encoded codewords to appear computationally indistinguishable from uniformly random strings while still supporting reliable decoding.

\begin{definition}[Pseudorandom Code \cite{christ2024pseudorandom,alrabiah2025ideal}]
    A secret-key PRC over alphabet $\Sigma$ with threshold $\delta$ is a tuple of probabilistic polynomial time algorithms $(\Gen, \Encode, \Decode)$ with the following syntax.
    \begin{itemize}
            \item $\Gen(1^\secpara; r) \to \sk$: On input security parameter $\secpara$ and randomness $r$, it outputs a secret key $\sk$.
            \item $\Encode(\sk) \to c \in \Sigma^{n}$: On input a secret key $\sk$, it outputs a codeword $c$.
            \item $\Decode(\sk, c \in \Sigma^n) \to 1 / 0$: 
            On input a secret key $\sk$ and a word $c$, it outputs $1$ if the decoding succeeds; otherwise, it outputs $0$.
    \end{itemize}
    We refer the readers to \cref{PRCdef} for detailed definitions including $\delta$-adaptive robustness, soundness, and pseudorandomness. Pseudorandom codes have been constructed assuming sub-exponential hardness of learning parity with noise (LPN) \cite{christ2024pseudorandom,alrabiah2025ideal}.
\end{definition}

\subsection{Pseudorandom Generators (PRGs)}
\begin{definition}[Pseudorandom Generator]
A \emph{pseudorandom generator} (PRG) is a deterministic polynomial-time algorithm
$\PRG : \zo^{\ell} \to \zo^{n}$, where $n > \ell$, such that for every probabilistic polynomial-time adversary $\adv$,
\[
\left| 
\Pr_{s \gets \zo^{\ell}} \left[ \adv(\PRG(s)) = 1 \right]
-
\Pr_{y \gets \zo^{n}} \left[ \adv(y) = 1 \right]
\right| = \negl(\secpara).
\]
\end{definition}
\subsection{Constrained Pseudorandom Functions (CPRFs)} 
Constrained pseudorandom functions (CPRFs) extend ordinary pseudorandom functions by allowing the holder of a constrained key to evaluate the function only on a restricted subset of inputs \cite{boneh2013constrained}. 
Intuitively, constrained keys reveal limited functionality of the underlying PRF while still hiding information about evaluations outside the authorized region.

\begin{definition}[Constrained Pseudorandom Functions] Let $\secpara$ be security parameter and $\pp$ be a public parameter as inputs implicit in all algorithms. A Constrained Pseudorandom Function (CPRF) with key space $\mathcal{K}$, domain $\mathcal{X}$, and range $\mathcal{Y}$ that supports constraints represented by the class of circuits $\mathcal{C}$, where each $C \in \cC$ maps $\mathcal{X}$ to $\{0,1\}$, is a tuple of polynomial time algorithms $(\Gen, \Eval, \constrain, \CEval)$ with the following syntax.

    \begin{itemize}
        \item $\Gen(1^\secpara) \rightarrow \msk$: The randomized key generation algorithm takes as input a security parameter $\secpara$ and outputs a master secret key $\msk \in \mathcal{K}$.
        \item $\Eval(\msk,x)\rightarrow y$: The deterministic evaluation algorithm takes as input the master secret key $\msk$ and input $x \in \mathcal{X}$, and outputs $y \in \mathcal{Y}$.
        \item $\constrain(\msk, C)\rightarrow \csk$: The randomized constrain algorithm takes as input the master secret key $\msk$ and a constraint circuit $C \in \mathcal{C}$, and outputs a constrained key $\csk$.
        \item $\CEval(\csk,x)\rightarrow y$: The deterministic constrained evaluation algorithm takes as input the constrained key $\csk$ and an input $x \in \mathcal{X}$ and outputs $y \in \mathcal{Y}$.
    \end{itemize}

It is required that  $\Eval(\msk,x)$ and $\CEval(\csk,x)$ produces the same value when $C(x) = 1$, and that the value of $\Eval(\msk,x)$ is indistinguishable from random to any efficient adversary with $\csk$ when $C(x) = 0$. We refer the readers to \cref{CPRFdef} for the full definition.

CPRFs have been constructed for several circuit classes, such as general circuits, inner-product circuits, and logarithmic depth circuits \cite{brakerski2015constrained,couteau2023constrained,servan2024constrained}. These results are based on standard cryptographic assumptions, such as the decisional Diffie-Hellman (DDH) or the learning with errors (LWE) assumptions, with some of the assumptions commonly believed to be quantum-secure (post-quantum).
\end{definition}

\section{Attribute-based Watermarking: Definition and Construction}
\label{sec:watermarking}
Attribute-based watermarking associates generated outputs with attributes while allowing watermark detection only under authorized policies. Such a framework is particularly relevant for generative AI systems where watermark verification, robustness, selective detectability, and output quality preservation may all be required simultaneously.

This section introduces the formal framework and security requirements for attribute-based watermarking schemes for generative AI models. The security notions characterize robustness against adversarial corruptions, indistinguishability from ordinary model outputs, and soundness under the presence of a detection key that is authorized for other attributes. A corresponding construction is presented in \cref{sec:construction} and its security analysis is given in Appendix \ref{sec:proofs}.

Let $\Model$ be a generative AI model and $\Classifier$ be an adversarially robust output classifier for the model as defined in \cref{sec:model}.
Given a prompt $\prompt$, the model generates an output $\out \gets \Model(\prompt)$. 
We will apply the classifier to obtain an attribute $x \gets \Classifier(\out)$ associated with the output. 
A policy $f \in \cF$ specifies a predicate over the attribute space $\cX$, and determines whether detection is authorized for outputs with attribute $x$ by evaluating $f(x) \in \zo$. In an attribute-based watermarking scheme, a detection key for $f$ only enables detection on outputs whose attribute satisfies $f(x)=1$.

The attribute serves as an intermediate description of the output, and policies operate on attributes rather than directly on raw outputs.
For example, an attribute $x \in \cX$ can be a set of keywords of the output, a policy $f$ can be a function that determines if an attribute $x \in \cX$ contains a keyword that belongs to a certain area, and a policy family $\cF$ can contain policies of the above type for several different areas (\eg medical, education, finance).

\begin{definition}
    An \emph{attribute-based watermarking scheme} $\cW$ with attribute space $\cX$ and policy family $\cF$ for a generative AI model $\Model$ consists of the following efficient algorithms:
\begin{itemize}
    \item $\Setup(1^\secpara) \to \msk$: On input a security parameter $\secpara$, it outputs a master secret key $\msk$.
    \item $\Issue(\msk, f) \to \dk_f$: On input a master secret key $\msk$ and a policy $f \in \cF$, it outputs a detection key $\dk_f$ tailored for the policy $f$.
    \item $\Generate(\msk, \prompt) \to \out$: On input a master secret key $\msk$ and a prompt $\prompt$, it generates an outcome $\out$.
    \item $\Detect(\dk_f, \out) \to 0/1$: On input a detection key $\dk_f$ and an outcome $\out$, it determines whether $\out$ is generated by the watermarking scheme.
    \item $\MasterDetect(\msk,\out) \to 0/1$: On input a master secret key $\msk$ and an outcome $\out$, it determines whether $\out$ is generated by the watermarking scheme.
\end{itemize}
\end{definition}

We require that an authorized detection behaves consistently with the master detection procedure. 
In particular, whenever a policy authorizes detection for an attribute, both procedures should produce the same detection outcome.

\begin{definition}[Consistency]
\label{def:consistency}
An attribute-based watermarking scheme $\cW$ is said to satisfy \emph{consistency}
if for every policy
$f \in \cF$ and every outcome $\out$ with $f(\Classifier(\out)) = 1$,
\[\Pr\left[
\Detect(\dk_f,\out)
=\MasterDetect(\msk,\out)\;\middle|\;\begin{matrix}
\msk \gets \Setup(1^\secpara)\\
\dk_f \gets \Issue(\msk,f)
\end{matrix}\right]\ge 1-\negl(\secpara)\]   
\end{definition}

We now characterize several security properties of our attribute-based watermarking scheme for generative AI models. These security properties hold even against adversaries who can get a detection key of a policy $f \in \cF$. 

The following notion captures the idea that a small fraction of modifications to a watermarked output $\out$ should not remove its watermark, even for modifications carried out by an adversary who observed many watermarked outputs and obtained a detection key $\dk_f$, as long as the detection key is unauthorized for detecting $\out$, \ie when $f$ satisfies $f(\Classifier(\out))= 0$.

\begin{definition}{(Adaptive Robustness with Detection)}
\label{def:adaptive robustness}
An attribute-based watermarking scheme $\cW$ is said to satisfy
\emph{adaptive robustness with detection against $\delta$-fraction of adversarial corruptions} if for every policy $f \in \cF$ and every probabilistic polynomial-time
adversary $\adv$,
\[
\Pr\left[
\frakG^{\mathsf{Robust}}_{\secpara,f,\cW}(\adv)=1
\right] = \negl(\secpara),
\]
where $\frakG^{\mathsf{Robust}}_{\secpara,f,\cW}(\adv)$ is defined as the following security game:
\begin{itemize}
    \item Generate $\msk \gets \Setup(1^\secpara)$ and send the detection key $\dk_{f} \gets \Issue(\msk,f)$ to $\adv$.
    \item $\adv$ can freely choose to query the generation algorithm $\out \gets \Generate(\msk,\prompt)$ on demand using any prompt $\prompt$ . The security game will record all the generated watermarked outputs as a list $\cS$.
    \item $\adv$ produces an output $\out^*$. The game outputs $1$ (\ie $\adv$ wins the security game) if
    \[f(\Classifier(\out^*)) = 0,\; \dist(\out^*, \cS)<\delta,\;\text{and }\MasterDetect(\msk,\out^*)= 0\]
\end{itemize}
The adversary wins the security game if it manages to produce an output $\out^*$ that is $\delta$-close to some watermarked sample with an attribute that does not satisfies the policy $f$, yet $\MasterDetect$ fails to recognize $\out^*$ as watermarked. Adaptive robustness with detection says that the winning probability of any efficient adversary is negligible. In other words, no efficient adversary can remove the watermark under a small perturbation using the information given by the detection key $\dk_f$ when the attribute of the watermarked output does not satisfy the policy $f$.
\end{definition}

The following notion captures that watermarking should not affect the quality of generated outputs. 
In particular, an adversary who holds an unauthorized detection key should not be able to distinguish watermarked outputs from ordinary model outputs.

\begin{definition}[Undetectability] 
\label{def:abundetectability}
    $\cW$ is said to satisfy \emph{attribute-based undetectability} if for every policy $f \in \cF$ and every probabilistic polynomial-time adversary $\adv$,
    \[\abs{\Pr\left[1 \gets \adv^{\mathsf{Env}(1^\secpara, f)}\right] - \Pr\left[1 \gets \adv^{\mathsf{Env}'(1^\secpara, f)}\right]} = \negl(\secpara),\]

where $\mathsf{Env}(1^{\secpara}, f)$ and $\mathsf{Env}'(1^{\secpara}, f)$ are defined as follows.

$\mathsf{Env}(1^{\secpara}, f)$:
\begin{itemize}
    \item Generate $\msk \gets \Setup(1^\secpara)$ and return a detection key $\dk_{f} \gets \Issue(\msk,f)$.
    \item Upon query a prompt $\prompt$, return a generated output $\out \gets \Generate(\msk, \prompt)$.
\end{itemize}
$\mathsf{Env'}(1^{\secpara})$:
\begin{itemize}
    \item Generate $\msk \gets \Setup(1^\secpara)$ and return a detection key $\dk_{f} \gets \Issue(\msk,f)$.
    \item Upon query a prompt $\prompt$, compute an ordinary model output $\out \gets \Model(\prompt)$.\\ If $f(\Classifier(\out)) = 0$, return $\out$. Otherwise return $\out' \gets \Generate(\msk, \pi)$.
\end{itemize}
\end{definition}

The following property says that there is no way for a person who holds a detection key $\dk_f$ to come up with an output with an attribute that does not satisfy the policy $f$, but the output gets identified as AI generated. 

\begin{definition}[Soundness]
\label{def:soundness}
$\cW$ is said to satisfy \emph{soundness} if for any probabilistic polynomial-time adversary $\adv$,
\[
\Pr\left[
\mathfrak{G}^{\mathsf{Sound}}_{\secpara, f, \cW}(\adv)=1
\right] = \negl(\secpara),
\]
where $\mathfrak{G}^{\mathsf{Sound}}_{\secpara, f, \cW}(\adv)$ is defined as the following security game:
\begin{itemize}
    \item Generate $\msk \gets \Setup(1^\secpara)$ and send the detection key $\dk_{f} \gets \Issue(\msk,f)$ to $\adv$.
    \item $\adv$ produces an output $\out^*$. The game outputs $1$ (\ie $\adv$ wins the security game) if
    \[f(\Classifier(\out^*)) = 0\; \text{and }\MasterDetect(\msk,\out^*)= 1\]
\end{itemize}
The adversary wins the security game if it manages to come up with an output $\out^*$ with an attribute that does not satisfies the policy $f$ without accessing the watermark generation algorithm, yet $\MasterDetect$ recognizes the output $\out^*$ as watermarked.
\end{definition}

\subsection{Construction}
\label{sec:construction}

We now present our construction for attribute-based watermarking. 
At a high level, the main challenge is to support fine-grained policy-scoped detection. A naive approach would be to generate a separate watermarking and detection key pair for every possible attribute, and distribute only the detection keys corresponding to authorized attributes. However, such an approach scales poorly with the dimension of the attribute space. For example, supporting a $50$-dimensional binary attribute space would require generating and storing $2^{50}$ distinct keys, resulting in prohibitive storage and key-management costs. Our construction avoids this exponential blowup by using constrained pseudorandom functions to derive attribute-dependent watermarking randomness from a single master secret key. Detection keys are issued with respect to policies over the attribute space, rather than individual attributes, enabling compact policy-scoped delegation. As a result, the total key size remains compact while incurring only a small computational overhead during detection.

Our construction  makes use of the following ingredients introduced in \cref{sec:prelim}, including a model $\Model$ with randomness recovery $\RandRecover$, an adversarially robust output classifier $\Classifier$ for the model, a $\PRG$, a $\CPRF = (\Gen, \Eval, \Constrain, \CEval)$, and a $\PRC = (\Gen, \Enc,\Dec)$. 

\begin{construction}\label{construction}
\noindent
\newline
\begin{minipage}{\linewidth}
\begin{multicols}{2}
\begin{itemize}
    
    \item[-] $\Setup(1^\secpara)$:  
    \begin{enumerate}
        \item $\msk \gets \CPRF.\Gen(1^\secpara)$.
        \item Output $\msk$.
    \end{enumerate}
    
    \item[-] $\Issue(\msk, f)$: 
    \begin{enumerate}
        \item $\dk \gets \CPRF.\Constrain(\msk, f)$.
        \item Output $\dk$.
    \end{enumerate}
    
    \item[-] 
    {$\Generate(\msk,\prompt)$:
    \begin{enumerate}
        \item $\out' \gets \Model(\prompt)$.
        \item $x \gets \Classifier(\out')$.
        
        \item $r\gets\CPRF.\Eval(\msk, x)$.
        \item $s \gets \PRC.\Gen(1^\secpara; \PRG(r))$.
        
        \item $c \gets \PRC.\Enc(s)$.
        \item $\out \gets \Model(\prompt; c)$.
        \item Output $\out$.
    \end{enumerate}
    }
    
    \columnbreak
    
    \item[-] 
    {$\Detect(\dk,\out)$:
    \begin{enumerate}
        \item $x \gets \Classifier(\out)$.
        
        \item $r \gets \CPRF.\CEval(\dk, x)$.
        \item $s \gets \PRC.\Gen(1^\secpara; \PRG(r))$.
        
        \item $c \gets \RandRecover(\out)$.
        \item $d \gets \PRC.\Dec(s, c)$.
        
        \item Output $1$ iff $d = 1$.
    \end{enumerate}
    }
    \item[-] 
    {$\MasterDetect(\dk,\out)$:
    \begin{enumerate}
        \item $x \gets \Classifier(\out)$.
        
        \item 
        $r \gets \CPRF.\Eval(\msk, x)$.
        \item $s \gets \PRC.\Gen(1^\secpara; \PRG(r))$.
        
        \item $c \gets \RandRecover(\out)$.
        \item $d \gets \PRC.\Dec(s, c)$.
        
        \item Output $1$ if and only if $d = 1$.
    \end{enumerate}
    }
\end{itemize}
\end{multicols}
\end{minipage}
\end{construction}

We prove the following theorem for our construction. The proof is given in Appendix \ref{sec:proofs}.
\begin{theorem}\label{thm:main-theorem}
    \cref{construction} is an attribute-based watermarking scheme that satisfies consistency (\cref{def:consistency}), adaptive robustness with detection against a constant $\delta$ fraction of adversarial corruptions (\cref{def:adaptive robustness}), undetectability (\cref{def:abundetectability}), and soundness (\cref{def:soundness}).
\end{theorem}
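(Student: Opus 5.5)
We prove the four properties separately. Each proof uses one hybrid step: in the real experiment the PRC key for attribute $x$ is derived as $s_x = \PRC.\Gen(1^\secpara;\PRG(\CPRF.\Eval(\msk,x)))$, and for every attribute with $f(x)=0$ we replace $s_x$ by an independent, honestly sampled PRC key. To justify this replacement, we first invoke CPRF security. Given $\dk_f$, the values $\CPRF.\Eval(\msk,x)$ for $f(x)=0$ are indistinguishable from independent uniform strings $r_x$; a reduction can answer generation queries with $f(x)=1$ using $\CEval$ on $\dk_f$, and queries with $f(x)=0$ using the challenge oracle. We then invoke PRG security to replace $\PRG(r_x)$ by truly uniform coins, which makes $s_x$ a fresh PRC key. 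The number of distinct attributes queried is polynomial, so a standard hybrid over them keeps the total loss negligible. After this switch, the adversary's view regarding out-of-policy attributes depends on $\msk$ only through independent PRC keys, and $\dk_f$ is simulatable from $\msk$ alone.

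\textbf{Consistency.} This follows directly from CPRF correctness. If $f(\Classifier(\out))=1$, then $\CEval(\dk_f,x)=\Eval(\msk,x)$ except with negligible probability. Both $\Detect$ and $\MasterDetect$ then derive the same $s$ deterministically and run the same deterministic $\RandRecover$ and $\PRC.\Dec$, so they output the same bit.

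\textbf{Undetectability.} In the hybrid, a query whose sample $\out'$ has $f(\Classifier(\out'))=0$ produces $\Model(\prompt;c)$ with $c$ an encoding under a fresh key $s_x$. We replace $c$ by uniform randomness using PRC pseudorandomness, with a reduction that has oracle access to $\PRC.\Enc(s_x)$ for each such $x$. The answer then becomes an ordinary model output. One mismatch must be addressed: the real $\Generate$ classifies the auxiliary sample $\out'$ but returns the second sample $\out$, whereas $\mathsf{Env}'$ branches on the classification of the first sample. We bridge this using $\eta$-consistency of $\Classifier$ with $\eta=1-\negl$, so that $\Classifier(\out)=\Classifier(\out')$ except with negligible probability. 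Queries with $f(x)=1$ are answered by $\Generate$ in both environments.

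\textbf{Soundness and robustness.} For soundness, in the hybrid with $f(\Classifier(\out^*))=0$, $\MasterDetect$ decodes under a key $s_x$ that the adversary has never seen, and which is never used for encoding since soundness allows no generation queries. PRC soundness then gives $\Dec(s_x,c)=1$ with negligible probability; a union bound over the key actually used suffices, since the adversary must commit to a single $\out^*$. For adaptive robustness, let $\out^*$ be $\delta$-close to some $\out\in\cS$ with $f(\Classifier(\out^*))=0$. By $(\delta,\negl)$-adversarial robustness and consistency of $\Classifier$, we get $\Classifier(\out^*)=\Classifier(\out)=x$, the attribute used when $\out$ was generated. By \cref{def:randomness-recovery}, $\RandRecover(\out^*)$ is then $2\delta$-close to the codeword $c$ embedded in $\out$. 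In the hybrid, the reduction simulates the adversary with $\PRC$ encoding oracles for the out-of-policy keys, and a failure of $\MasterDetect$ would yield a PRC adaptive-robustness break at threshold $2\delta$. We therefore require the PRC to tolerate a $2\delta$ fraction of errors.

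\textbf{Main obstacle.} The hardest step is this last robustness reduction. The perturbed output $\out^*$ must be transferred to a corruption of the codeword in the PRC's adaptive robustness game, while the adversary's attack is adaptive and also uses $\dk_f$ and in-policy generations. Two further difficulties arise here. First, the randomness recovery guarantee of \cref{def:randomness-recovery} is stated for an error channel applied to a single honest sample, not for adversarially chosen perturbations of a sample selected from $\cS$. Second, the classifier's robustness must be applied to an output that was itself produced with PRC randomness rather than ordinary randomness. I would handle both by first switching to uniform randomness via undetectability, arguing the events there, and then switching back, since the failure events are efficiently checkable given $\msk$. I would also guess the index of $\out\in\cS$, which costs a polynomial factor.
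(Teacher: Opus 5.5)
Your proposal follows the paper's proof essentially step for step: CPRF correctness for consistency; CPRF-then-PRG hybrids to fresh out-of-policy PRC keys, followed by PRC pseudorandomness with $\eta$-consistency, PRC soundness on the adversary's fixed word, and a guessing reduction to PRC adaptive robustness at threshold $2\delta$ via randomness recovery. One detail needs tightening: the PRC robustness game has a single key, so, as the paper does, you should guess which out-of-policy attribute (indexed by its first appearance among the queries) receives the challenge key, not the index of $\out\in\cS$, since otherwise earlier queries with that attribute would already have been answered under a different key.

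The concerns in your last paragraph are ones the paper simply glosses over by invoking randomness recovery and classifier robustness directly. Your switch-to-uniform remedy can handle source outputs with out-of-policy attributes. It cannot handle an $\out^*$ obtained by perturbing an in-policy output into an out-of-policy attribute, because those keys are computable from $\dk_f$, so nothing can be switched. That case rests on classifier robustness over watermarked outputs against key holders, which the paper also assumes implicitly.
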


\section{Experiments}
\label{sec:experiments}

\paragraph{Implementation details.}

We implement our prototype in PyTorch~\cite{paszke2019pytorch}
using the Hugging Face Transformers
library~\cite{wolf-etal-2020-transformers}.
For text generation, we use
\texttt{meta-llama/Llama-3.2-1B-Instruct}%
~\cite{meta2024llama32,grattafiori2024llama3herdmodels}.
For all generation, we used \texttt{temperature} $= 1.0$, \texttt{top\_p} $= 0.95$, and \texttt{top\_k} $= 0.0$ (disabled) unless otherwise specified.
All experiments were conducted on NVIDIA L4 GPUs through Google
Colab and consumed approximately 70 compute credits in total.
Our implementation builds on the pseudorandom-code-based approach to
undetectable language-model watermarking introduced by Christ and
Gunn~\cite{christ2024pseudorandom}. The accompanying materials contain the complete implementation,
environment setup instructions, evaluation scripts, and scripts
for reproducing all figures and tables reported here%
\footnote{GitHub:
\url{https://github.com/maxraffel/attribute-based-watermarking}.}
We instantiate the attribute classifier using
\texttt{BAAI/bge-reranker-v2-m3}~\cite{chen-etal-2024-m3,baai2024bgererankerv2m3}. Given a generated output and a candidate attribute from a fixed vocabulary, the model assigns a relevance score to the corresponding output--attribute pair. The classifier is always used in a multi-label manner: we assign every attribute whose relevance score exceeds the fixed threshold $\tau=0.001$. We use the same classifier, fixed attribute vocabulary, and threshold throughout all experiments. The experimental conditions differ only in prompt construction: the single-attribute prompts are designed to elicit one target attribute, whereas the two-attribute prompts are designed to jointly elicit two target attributes.

Our implementation of \textsc{RandRecovery} uses probability-balanced token masks. At each watermarked generation step, the candidate tokens are partitioned into two sets corresponding to recovered bit values $0$ and $1$, with their aggregate model probabilities made as close as possible. These two values need not be equally likely under the original next-token distribution. Rather than conditioning on the selected mask at every step, the scheme uses the mask to truncate the token space only with a distribution-dependent probability, which is typically larger when the partition is more balanced and the next-token distribution has higher entropy. The complementary sampling rule is calibrated so that, after marginalizing over the watermarking randomness, the original model probability is unaffected. Thus, the next-token distribution is identical with and without watermarking; balancing the two probability masses instead maximizes how often the mask can be applied and thereby improves bit-embedding and recovery reliability.

During recovery, the token partitions are reconstructed deterministically by replaying the observed token sequence through the language model and recomputing the next-token probabilities at every position. Because the recovery algorithm does not receive the original prompt, generation maintains an auxiliary prompt-free model state and constructs the partitions from this state. The same state can subsequently be reconstructed using only the generated continuation.

We generate the first 100 tokens without embedding watermark bits. These tokens provide a warm-up prefix from which the prompt-free state can be initialized before watermark embedding begins. Consequently, watermarked generation requires an additional model forward pass at each decoding step to maintain the auxiliary state, while recovery requires model evaluations to reconstruct the masks associated with the observed tokens.

For computational efficiency, unwatermarked generation is performed in batches, and the model evaluations required during recovery are also batched. Unless stated otherwise, error bars show the empirical mean plus or minus two standard deviations, i.e., $\mathrm{mean}\pm 2\sigma$.

\begin{figure}[t]
    \centering
    \begin{minipage}{0.47\textwidth}
        \centering
        \includegraphics[width=\linewidth]{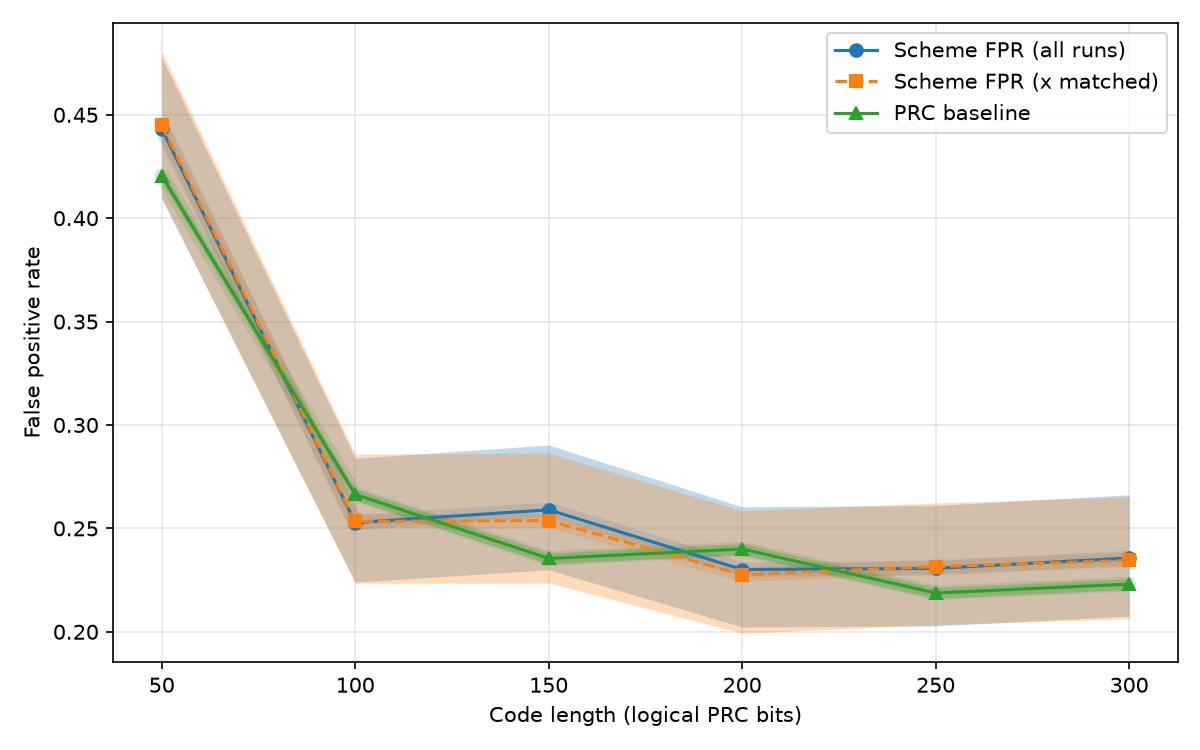}
        \caption{Empirical false-positive rate (FPR) as the output length and the corresponding PRC parameter $\kappa$ increase. Each empirical rate is estimated from 200 generated samples. We additionally report a Monte Carlo estimate of the false-positive rate of the underlying PRC detector, computed using 100,000 random samples for each value of $\kappa$. The redundancy parameter is fixed to $\rho=1$.}
        \label{fig:newfpr}
    \end{minipage}
    \hfill
    \begin{minipage}{0.47\textwidth}
        \centering
        \includegraphics[width=\linewidth]{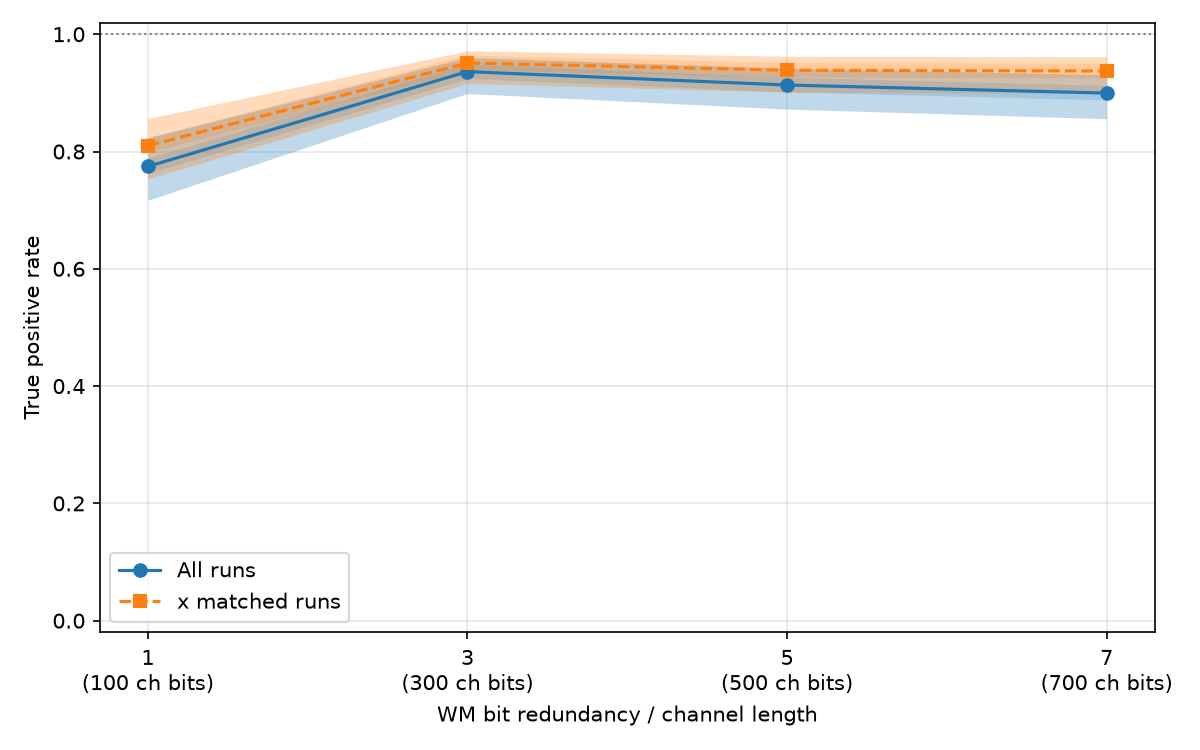}
        \caption{Empirical true-positive rate (TPR) as the output length and redundancy increase. We report both end-to-end detection, which includes errors from the multi-label attribute classifier, and detection using the intended ground-truth attribute, which bypasses attribute classification and thereby isolates watermark recovery and PRC detection under the intended policy. We fix $\kappa=100$ and evaluate 250 samples for each output length. The channel length does not include the 100 token unwatermarked prefix which is skipped by redundancy}
        \label{fig:newtpr}
    \end{minipage}
\end{figure}

\begin{table}[t]
    \centering
    \renewcommand{\arraystretch}{1.2}
    \setlength{\tabcolsep}{4pt}
    \caption{Single-attribute-prompt detection matrix. Each column $j$ corresponds to outputs generated from prompts designed to elicit target attribute $j$, and each row $i$ corresponds to a detection key constrained to attribute $i$. Each entry reports the fraction of outputs in column $j$ detected by the corresponding constrained key. Entries above $50\%$ are highlighted.}
    \label{tab:new-single-label-matrix}
    
    \begin{tabular}{cccccc}
        \hline
        & Medicine & Economics & Art & Software & Sports \\ \hline
        Medicine
        & \cellcolor{gray!25}86.0\%
        & 28.57\%
        & 26.0\%
        & 22.0\%
        & 33.33\% \\

        Economics
        & 28.0\%
        & \cellcolor{gray!25}87.76\%
        & 14.0\%
        & 26.0\%
        & 19.61\% \\

        Art
        & 24.0\%
        & 28.57\%
        & \cellcolor{gray!25}88.0\%
        & 24.0\%
        & 21.57\% \\

        Software
        & 18.0\%
        & 24.49\%
        & 16.0\%
        & \cellcolor{gray!25}98.0\%
        & 17.65\% \\

        Sports
        & 26.0\%
        & 24.49\%
        & 22.0\%
        & 22.0\%
        & \cellcolor{gray!25}88.24\% \\
        \hline
    \end{tabular}
\end{table}

\begin{table}[t]
    \centering
    \renewcommand{\arraystretch}{1.2}
    \setlength{\tabcolsep}{4pt}
    \caption{Two-attribute-prompt detection matrix. Each column corresponds to outputs generated from a prompt designed to jointly elicit two target attributes, and each row corresponds to the attribute to which the detection key is constrained. Each entry reports the fraction of outputs generated from the column prompt that are detected by the corresponding constrained key. Entries above $50\%$ are highlighted. Variation across prompts may reflect differences in both the token-level entropy of the generated responses and the accuracy of multi-label attribute classification.}
    \label{tab:new-multi-label-matrix}

    \begin{tabular}{cccc}
        \hline
        & \texttt{sports\_econ}
        & \texttt{art\_software}
        & \texttt{medicine\_software} \\ \hline

        Medicine
        & 32.0\%
        & 24.0\%
        & \cellcolor{gray!25}76.0\% \\

        Economics
        & \cellcolor{gray!25}86.0\%
        & 28.0\%
        & 26.0\% \\

        Art
        & 32.0\%
        & \cellcolor{gray!25}98.0\%
        & 26.0\% \\

        Software
        & 20.0\%
        & \cellcolor{gray!25}98.0\%
        & \cellcolor{gray!25}76.0\% \\

        Sports
        & \cellcolor{gray!25}88.0\%
        & 32.0\%
        & 28.0\% \\
        \hline
    \end{tabular}
\end{table}

\paragraph{Attribute-constrained detection.}
Table~\ref{tab:new-single-label-matrix} shows a clear separation between target and non-target attributes for the single-attribute prompts. For outputs generated from prompts targeting attribute $j$, the key constrained to $j$ achieves a detection rate between $86\%$ and $98\%$. In contrast, detection by keys constrained to non-target attributes ranges from $14\%$ to $33.33\%$. Thus, constrained keys detect outputs generated from prompts targeting their authorized attributes substantially more often than outputs generated from prompts targeting other attributes, although cross-attribute detection remains non-negligible in the current prototype, due to the baseline FPR of the PRC implementation used, as seen in Figure\ref{fig:newfpr}.

The two-attribute-prompt results in Table~\ref{tab:new-multi-label-matrix} exhibit the intended policy behavior for all three tested attribute pairs. For each prompt, both target attributes are detected at rates above $50\%$, with target-attribute detection rates ranging from $76\%$ to $98\%$. Detection under non-target attributes ranges from $20\%$ to $32\%$. These results suggest that the construction can support prompts targeting multiple attributes while retaining meaningful separation from unrelated policies.

\paragraph{Dependence on token-level entropy.}
We observe substantial variation in detection performance across prompts and subject areas. This variation is expected because the effectiveness of \textsc{RandRecovery} depends strongly on the concentration of the next-token probability distribution encountered during generation. When the distribution is less concentrated and has higher effective entropy, the token vocabulary can typically be partitioned into two sets whose probability masses are closer to $1/2$. Although the two recovered bit values need not be exactly equally likely, a more balanced partition increases the probability with which the mask can be applied, and therefore improves bit-embedding and recovery reliability. In more concentrated decoding states, a small number of tokens may carry most of the probability mass, making balanced partitions impossible to construct, drastically reducing the probability with which the desired bit can be embedded.

End-to-end detection performance additionally depends on the reliability of the multi-label attribute classifier. Classification accuracy can vary across topics and prompts: an intended target attribute may fail to exceed the threshold, or an unrelated attribute may be assigned spuriously, even when the embedded PRC codeword is otherwise recoverable. The reported detection rates therefore reflect the combined effects of the next-token probability distribution, randomness-recovery accuracy, and multi-label attribute-classification accuracy. Consequently, aggregate results may conceal substantial prompt-level heterogeneity.

\paragraph{Runtime.}
We measure the wall-clock runtime of each stage of the prototype. The
\texttt{setup} stage generates the master secret key and initializes the PRC parameters. The \texttt{no-wm gen} and \texttt{wm gen} stages generate an unwatermarked response and a watermarked response, respectively. The \texttt{issue} stage produces one unconstrained detection key and one constrained detection key for each attribute in the fixed vocabulary $\mathcal{V}$, where $|\mathcal{V}|=5$. The \texttt{detect} stage reports the runtime of a single constrained-key detection, averaged over the five constrained detection keys corresponding to the attributes in $\mathcal{V}$. Thus, the reported value is an average per-key detection time rather than the total time required to run all five constrained keys.

\begin{table}[t]
    \caption{Mean wall-clock runtime of each scheme stage, in seconds, averaged over 150 runs with $\kappa=100$.}
    \label{tab:runtime}
    \centering
    \renewcommand{\arraystretch}{1.2}
    \setlength{\tabcolsep}{4pt}

    \begin{tabular}{cccccc}
        \hline
        Generated tokens
        & Setup
        & No-WM generation
        & WM generation
        & Issue
        & Detect \\
        \hline

        400
        & 0.000143
        & 5.5573
        & 10.506
        & 0.000887
        & 4.5675 \\

        600
        & 0.000137
        & 5.8368
        & 16.437
        & 0.000953
        & 7.3855 \\

        \hline
    \end{tabular}
\end{table}

\section*{Acknowledgements and AI assistance}
The authors thank Shang-Hua Teng, Vatsal Sharan, Elaine Shi, and anonymous reviewers for useful feedback. AI was used to aid in the implementation and experiment creation significantly, primarily though AI-assisted development using Cursor and a variety of models. ChatGPT 5.5 was also utilized for LATEX table generation, as well as formatting and grammatical help.

\bibliographystyle{unsrtnat}
\bibliography{reference}

\appendix

\section{Definitions}
\subsection{Pseudorandom Error-Correcting Codes}

\begin{definition}[Pseudorandom Code \cite{christ2024pseudorandom,alrabiah2025ideal}] \label{PRCdef}
    A secret-key PRC over alphabet $\Sigma$ with threshold $\delta$ is a tuple of probabilistic polynomial time algorithms $(\Gen, \Encode, \Decode)$ with the following syntax and properties.
    \begin{itemize}
        \item \underline{Syntax.} Let $\ell_\sk, n, k$ be length functions.
        \begin{itemize}
            \item $\Gen(1^\secpara; r) \to \sk$: On input security parameter $\secpara$ and randomness $r$, it outputs a secret key $\sk \in \zo^{\ell_\sk}$.
            \item $\Encode(\sk) \to c \in \Sigma^{n}$: On input a secret key $\sk$, it outputs a codeword $c$.
            \item $\Decode(\sk, c \in \Sigma^n) \to 1 / 0$: 
            On input a secret key $\sk$ and a word $c$, it outputs $1$ if the decoding succeeds; otherwise, it outputs $0$.
        \end{itemize}
        \item \underline{$\delta$-Adaptive Robustness.} For every efficient adversary $\adv$, \[\Pr[\adv \text{ wins } \Game^{\mathsf{sk-robust}}_{\PRC,\delta,\adv}(\kappa)]\leq\negl(\secpara),\]
        where $\Game^{\mathsf{sk-robust}}_{\PRC,\delta,\adv}(\kappa)$ is the following security game: \begin{enumerate}
    \item The challenger samples $\sk\gets\PRC.\Gen(1^\secpara)$ and initialize transcript $\tau=\emptyset$.
    \item The adversary is allowed to make encoding queries. For each encoding query, the challenger responds $c=\PRC.\Encode(\sk)$ and sets $\tau \gets \tau\cup\{c\}$.
    \item The adversary sends the challenger $c^*$.
    \item The challenger computes $d \gets \PRC.\Decode(\sk,c^*)$. If there exists $c\in \tau$ such that $\dist(c^*, c) < \delta$ and $d = 0$, then the adversary $\adv$ wins; otherwise $\adv$ loses.
\end{enumerate}
        \item \underline{Soundness.} For any fixed word $c \in \Sigma^*$, 
        \[
            \pr[\sk \gets \Gen(1^\secpara)]{\Decode(\sk, c) = 0} \geq 1 - \negl(\secpara).
        \]
        \item \underline{Pseudorandomness.} For every probabilistic polynomial-time adversary $\adv$, 
        \[
            \abs{\pr[\sk \gets \Gen(1^\secpara)]{\adv^{\Encode(\sk, \cdot)}(1^\secpara)} - \pr{\adv^{\mathcal{U}}(1^\secpara)}} = \negl(\secpara),
        \]
        where the oracle $\mathcal{U}$ return a uniformly random word in $\Sigma^n$ for each query.
    \end{itemize}
\end{definition}

\subsection{Constrained Pseudorandom Functions (CPRFs)} 

\begin{definition}[Constrained Pseudorandom Functions] \label{CPRFdef}
Let $\secpara$ be security parameter and $\pp$ be a public parameter as inputs implicit in all algorithms. A Constrained Pseudorandom Function (CPRF) with key space $\mathcal{K}$, domain $\mathcal{X}$, and range $\mathcal{Y}$ that supports constraints represented by the class of circuits $\mathcal{C}$, where each $C \in \cC$ maps $\mathcal{X}$ to $\{0,1\}$, is a tuple of polynomial time algorithms $(\Gen, \Eval, \constrain, \CEval)$ with the following syntax and properties.

\begin{itemize}
    \item \underline{Syntax.}
    \begin{itemize}
        \item $\Gen(1^\secpara) \rightarrow \msk$: The randomized key generation algorithm takes as input a security parameter $\secpara$ and outputs a master secret key $\msk \in \mathcal{K}$.
        \item $\Eval(\msk,x)\rightarrow y$: The deterministic evaluation algorithm takes as input the master secret key $\msk$ and input $x \in \mathcal{X}$, and outputs $y \in \mathcal{Y}$.
        \item $\constrain(\msk, C)\rightarrow \csk$: The randomized constrain algorithm takes as input the master secret key $\msk$ and a constraint circuit $C \in \mathcal{C}$, and outputs a constrained key $\csk$.
        \item $\CEval(\csk,x)\rightarrow y$: The deterministic constrained evaluation algorithm takes as input the constrained key $\csk$ and an input $x \in \mathcal{X}$ and outputs $y \in \mathcal{Y}$.
    \end{itemize}
    \item \underline{Correctness.} For all security parameters $\secpara$, all constraints $C \in \mathcal{C}$, and all inputs $x \in \mathcal{X}$ such that $C(x)=1$ (authorized), it holds that:
    \[\Pr\left[\Eval(\msk,x)=\CEval(\csk,x)\;\middle|\; \begin{matrix}
        \msk\gets\Gen(1^\secpara)\\ \csk \gets \constrain(\msk,C)
    \end{matrix}\right]\geq 1- \negl(\secpara)\]
    \item \underline{Security.} Any efficient adversary $\adv$ has negligible distinguishing advantage
    \[\abs{\Pr[1 \gets \exp_{\adv,0}^{\mathsf{CPRF}}(\secpara)] - \Pr[1 \gets \exp_{\adv,1}^{\mathsf{CPRF}}(\secpara)]} = \negl(\secpara),\]
    where $\exp_{\adv,b}^{\mathsf{CPRF}}(\secpara)$ is the following security experiment ($b$ is the challenge bit):
\begin{enumerate}
    \item \textbf{Setup}: The challenger runs $\msk \gets \Gen(1^\secpara)$, initializes the set $Q:=\emptyset$, and runs $\adv(1^\secpara)$.
    \item \textbf{Pre-challenge queries}: $\adv$ adaptively sends arbitrary inputs $x \in \mathcal{X}$ to the challenger. For each $x$, the challenger computes $y:=\Eval(\msk,x)$, sends $y$ to $\adv$, and proceeds to update $Q:=Q \cup \{x\}$.
    \item \textbf{Constrain queries}: $\adv$ sends one constraint $C \in \mathcal{C}$ to the challenger. The challenger computes $\csk \gets \constrain(\msk,C)$, and sends $\csk$ to $\adv$.
    \item \textbf{Challenge queries}: For the single challenge query, $\adv$ sends input $x^* \in \mathcal{X}$ as its challenge query, subject to the restriction that $x^* \notin Q$ and $C(x^*)= 0$. If $b=0$, the challenger computes $y^*=\Eval(\msk, x^*)$. Else, if $b=1$, the challenger samples $y^* \gets \cY$.  The challenger sends $y^*$ to $\adv$.
    \item $\adv$ outputs a bit as its guess.
\end{enumerate}
\end{itemize}

\end{definition}

\section{Theoretical Proofs}
\label{sec:proofs}

\cref{thm:main-theorem} asserts that  \cref{construction} satisfies four key properties, which we prove one by one in the next four lemmas (\cref{lemma:consistency,lemma:adaptive-robustness,lemma:undetectability,lemma:soundness}).\footnote{For simplicity, the lemmas are stated for classifiers that are consistent and adversarially robust with negligible error. 
The latter requirement on the negligible error rate can be relaxed based on the concrete target error rate of the watermarking security.
}

\begin{lemma}
\label{lemma:consistency}
\cref{construction} satisfies consistency (\cref{def:consistency}).
\end{lemma}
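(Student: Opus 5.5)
The plan is to compare the two detection procedures of \cref{construction} line by line and reduce consistency to the correctness of the $\CPRF$ (\cref{CPRFdef}). Fix a policy $f \in \cF$ and an outcome $\out$ with $f(\Classifier(\out)) = 1$. Both $\Detect(\dk_f,\out)$ and $\MasterDetect(\msk,\out)$ first compute the same attribute $x \gets \Classifier(\out)$. This value is the same in both runs because $\Classifier$ is applied to the identical input $\out$. We treat it as deterministic; if it is randomized, we fix its coins, or equivalently use that consistency is stated per outcome. So the authorized condition gives $f(x) = 1$.

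The only point where the two procedures differ is the second step. $\Detect$ computes $r \gets \CPRF.\CEval(\dk_f, x)$, whereas $\MasterDetect$ computes $r' \gets \CPRF.\Eval(\msk, x)$. Here $\dk_f \gets \CPRF.\Constrain(\msk, f)$ and $\msk \gets \CPRF.\Gen(1^\secpara)$, exactly the distribution in the correctness condition with $C = f$. Since $f(x) = 1$, CPRF correctness gives $\Pr[r = r'] \ge 1 - \negl(\secpara)$.

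Next, condition on the event $r = r'$. Every remaining step is a deterministic function of $(r,\out)$:
\begin{itemize}
\item $s \gets \PRC.\Gen(1^\secpara; \PRG(r))$, where $\PRG$ is deterministic and $\PRC.\Gen$ is run with explicit randomness;
\item $c \gets \RandRecover(\out)$, which is deterministic by \cref{def:randomness-recovery};
\item $\PRC.\Dec(s,c)$.
\end{itemize}
So the two procedures produce identical outputs on this event. The possible exception is if $\PRC.\Dec$ is randomized. In that case I would either note that the PRC decoders of \cite{christ2024pseudorandom} are deterministic, or couple the two executions to use the same decoding coins, since consistency only asks that the outputs agree.

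A union bound then gives $\Pr[\Detect(\dk_f,\out) = \MasterDetect(\msk,\out)] \ge \Pr[r = r'] \ge 1 - \negl(\secpara)$. The main subtlety is not a calculation but the determinism and coupling issues: $\Classifier$ on the fixed $\out$, and possible randomness in $\PRC.\Dec$. I would address these explicitly as above, and otherwise the lemma follows directly from CPRF correctness.
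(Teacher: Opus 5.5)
Your proposal is correct and follows the paper's proof. The paper likewise uses $\CPRF$ correctness on the authorized attribute $x=\Classifier(\out)$ to get $\CPRF.\Eval(\msk,x)=\CPRF.\CEval(\dk_f,x)$ except with negligible probability, and then notes that both procedures perform the same deterministic computation afterwards. One small caution: your fallback of coupling the coins of a randomized $\PRC.\Dec$ is not licensed by \cref{def:consistency}, where the two procedures run with independent coins, so the argument genuinely rests on decoding being deterministic (as the paper implicitly assumes); also, your final step is simple event inclusion rather than a union bound.
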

\begin{proof}
    Let $f\in \cF$ and $\dk_f \gets \Issue(\msk, f) = \CPRF.\Constrain(\msk, f)$. For every $x \in \cX$, by the correctness of CPRF, we have
    \[\CPRF.\Eval(\msk, x) = \CPRF.\CEval(\dk_f, x)\] with probability $1-\negl(\secpara)$. Notice that in the construction, $\MasterDetect(\msk, \out)$ and $\Detect(\dk_f,\out)$ derive $r$ from $\CPRF.\Eval(\msk, x)$ and $\CPRF.\CEval(\dk_f, x)$ respectively using $x = \Classifier(\out)$, and performs the same deterministic computation afterwards. Therefore, $\MasterDetect(\msk, \out) = \Detect(\dk_f, \out)$ holds with probability $1-\negl(\secpara)$.
\end{proof}

\begin{lemma}
\label{lemma:adaptive-robustness}
\cref{construction} satisfies adaptive robustness with detection against a constant $\delta$ fraction of adversarial corruptions (\cref{def:adaptive robustness}).
\end{lemma}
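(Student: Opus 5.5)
The plan is a reduction to the $\delta'$-adaptive robustness of the underlying $\PRC$, with hybrids that remove the adversary's dependence on $\msk$ outside the target attribute. Fix a policy $f$ and an adversary $\adv$ winning $\frakG^{\mathsf{Robust}}_{\secpara,f,\cW}$ with probability $\epsilon$. Suppose $\adv$ outputs $\out^*$ with $\dist(\out^*,\out_j)<\delta$ for some $\out_j\in\cS$. Let $x^*=\Classifier(\out^*)$ and $x_j=\Classifier(\out_j)$.

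\textbf{Step 1 (attribute stability).} I first argue that $x^*=x_j$ except with negligible probability. Each $\out_j$ is distributed as $\Model(\prompt;c)$ with $c$ a PRC codeword, so $\out_j$ is computationally close to $\Model(\prompt)$ by PRC pseudorandomness. Adversarial robustness of $\Classifier$ over $\Model(\prompt)$ then gives that a $\delta$-close modification keeps the label.

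This step has a subtlety. The adversary here sees $\dk_f$ and many samples, whereas robustness is stated against an adversary given only $m$. I would handle it by letting the robustness reduction simulate everything else itself: sample its own $\msk$, answer all other queries, and guess the index $j$, losing only a polynomial factor.

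I also need $x_j$ to equal the attribute $x$ under which $\out_j$ was keyed in $\Generate$, i.e.\ $\Classifier(\out')$ for the preliminary sample $\out'$. This follows from $\eta$-consistency with $\eta=1-\negl$, again after swapping the codeword for uniform randomness. Together these give $f(x)=0$ for the attribute under which $\out_j$ was keyed, and $\MasterDetect$ on $\out^*$ uses that same key $s_x$.

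\textbf{Step 2 (hybrids removing $\msk$).} Guess the target attribute $x$ among the polynomially many queried attributes. Hybrid 1 replaces $r_x=\CPRF.\Eval(\msk,x)$ with a uniformly random value. I reduce to CPRF security as follows:
\begin{itemize}
\item issue the constraint $f$ to obtain $\dk_f$;
\item answer generation queries on attributes $x'\neq x$ via pre-challenge $\Eval$ queries;
\item use the challenge value at $x$, which is valid since $f(x)=0$ and $x\notin Q$.
\end{itemize}
There is an ordering issue: CPRF pre-challenge queries precede the constrain query, while $\adv$ receives $\dk_f$ first. I would either note that the constructions cited support adaptive evaluation queries after constraining, or fix the attribute set in advance. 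I expect this mismatch, together with Step 1's use of the classifier inside an interactive game, to be the main obstacle. Hybrid 2 replaces $\PRG(r_x)$ by uniform randomness, so that $s_x$ is an honestly generated PRC key.

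\textbf{Step 3 (reduction to PRC robustness).} In Hybrid 2, build a PRC adversary $\cB$ that:
\begin{itemize}
\item samples its own CPRF key, gives $\adv$ the key $\dk_f$, and simulates all generations under attributes $x'\neq x$;
\item for generations under $x$, obtains $c$ from the $\PRC$ encoding oracle and outputs $\Model(\prompt;c)$;
\item on receiving $\out^*$, submits $c^*=\RandRecover(\out^*)$.
\end{itemize}
By \cref{def:randomness-recovery}, $\dist(\out^*,\out_j)<\delta$ implies $\dist(c^*,c_j)<2\delta$, with $c_j$ the codeword (truncated to $n$ bits) used for $\out_j$. Whenever $\adv$ wins, $\MasterDetect$ computes exactly $\PRC.\Dec(s_x,c^*)=0$, so $\cB$ wins the $2\delta$-robustness game. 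Hence we need the $\PRC$ to have threshold $2\delta$, so the theorem holds for a constant $\delta$ at most half the PRC threshold. Summing negligible hybrid losses, scaled by polynomial guessing factors, yields $\epsilon=\negl(\secpara)$.
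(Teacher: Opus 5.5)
Your proposal is correct at the paper's level of rigor and follows its proof essentially step for step. Like the paper, you guess the target unauthorized attribute among at most $T$ candidates (a factor-$T$ loss), use $\CPRF$ security and then $\PRG$ security to turn $s_{\wt{x}}$ into a freshly sampled $\PRC$ key, and build $\cB$ that answers that attribute's queries with the $\PRC$ encoding oracle and submits $\RandRecover(\out^*)$, relying on the $\delta\to 2\delta$ blow-up (so the $\PRC$ threshold must be at least $2\delta$, which the paper uses implicitly).

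The paper handles your Step~1 with a direct appeal to adversarial robustness of $\Classifier$ on the nearby sample in $\cS$, and it does not address the $\CPRF$ query-ordering issue, so your extra care refines rather than departs from its argument. One caveat: your pseudorandomness transfer only works once $s_x$ is hidden ($f(x)=0$, after the hybrids). For samples keyed under authorized attributes, $\dk_f$ reveals $s_x$, so there you, like the paper, must invoke classifier robustness on watermarked outputs directly.
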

\begin{proof}
    Let $f\in\cF$ be any policy and $\adv$ be any probabilistic polynomial-time adversary. Let $T = \poly(\secpara)$ be the running time of $\adv$. Consider the following probabilities.
    \begin{itemize}
        \item Let $p_0$ be the probability that the security game $\frakG^{\mathsf{Robust}}_{\secpara,f,\cW}(\adv)$ outputs $1$. 
        \item Let $p_1$ be the probability that the following modified security game outputs $1$. The modified security game makes the following changes.
        \begin{itemize}
            \item Additionally sample a uniform $j \in \set{1,2,\dots,T}$ and initialize a set $\cT \gets \emptyset$.
            \item For each generation query made by $\adv$ with prompt $\prompt$, additionally compute $x \gets \Classifier(\Model(\prompt))$ and check if $f(x) = 0$, the special value $\wt{x}$ is unset, and $x \not\in \cT$. If so, update $\cT \gets \cT \cup \set{x}$. If $|\cT| = j$, set a special value $\wt{x} \gets x$. The rest of the steps for handling a generation query are the same.
            \item The modified security game outputs $1$ if both the original conditions and the condition $\Classifier(\out^*) = \wt{x}$ hold.
        \end{itemize}
        
        The original security game only accepts when $\dist(\out^*, \cS) < \delta$, \ie when there exists $\out \in \cS$ such that $\dist(\out^*, \out) < \delta$. By the adversarial robustness of the classifier, we have $\Classifier(\out^*) = \Classifier(\out)$ except with negligible probability. Therefore, up to a negligible probability, the original security game only outputs $1$ when $\Classifier(\out^*) \in \cT$. Since $\wt{x}$ is chosen from the set $\cT$ according to an independently sampled $j \in \set{1,2,\dots, T}$, the conditional probability of $\Classifier(\out^*) = \wt{x}$ is at least $\frac{1}{T}$, and therefore we have $p_1 \ge \frac{p_0 - \negl(\secpara)}{T}$.

        \item Let $p_2$ be the probability that the following modified security game outputs $1$. The modified security game makes the following changes to the previous game.
        \begin{itemize}
            \item Additionally sample a secret key $\sk \gets \PRC.\Gen(1^\secpara)$.
            \item For each generation query made by $\adv$ with prompt $\prompt$, additionally compute $x \gets \Classifier(\Model(\prompt))$ and check if $f(x) = 0$ and $x = \wt{x}$. If so, compute the corresponding output as $\out \gets \Model(\prompt; \PRC.\Generate(\sk))$ instead.
            \item The same conditions are used to determine the output of the security game, except that the computation of $\MasterDetect(\msk,\out^*)$ is replaced with the following procedure:
            \begin{itemize}
            \item Compute $c \gets \RandRecover(\out)$.
            \item Compute $d \gets \PRC.
            \Dec(\sk, c)$.
            \item Output $0$ if and only if $d = 0$.
        \end{itemize}
        \end{itemize}
        Notice that by construction, $\MasterDetect(\msk,\out^*)$ runs exactly the above procedure, except that $\MasterDetect(\msk,\out^*)$ derives $s_{x}$ and uses $s_{x}$ instead of $\sk$, where $x \gets \Classifier(\out^*)$. Recall that the security game also checks $x = \wt{x}$.

        Since we have the guarantee that $f(\wt{x}) = 0$, by the security of CPRF, even though the adversary holds $\dk_f \gets \Issue(\msk, f) = \CPRF.\Constrain(\msk,f)$, the value $r_{\wt{x}} = \CPRF.\Eval(\msk,\wt{x})$ is still indistinguishable from a random element. Moreover, the pseudorandomness of $\PRG$ and the pseudorandomness of $r_{\wt{x}}$ implies that the value $\PRG(r_{\wt{x}})$ is indistinguishable from a random string. Therefore, the adversary cannot distinguish if $s_{\wt{x}} = \PRC.\Gen(1^\secpara; r_{\wt{x}})$ is changed to $\sk \gets \PRC.\Gen(1^\secpara)$, which implies $\abs{p_1 - p_2} \le \negl(\secpara)$.
        \item Using $f$ and $\adv$, we build an efficient adversary $\cB$ against the adaptive robustness property of $\PRC$, where we define $\cB$ as follows.
    \begin{itemize}
        \item Generate $\msk \gets \Setup(1^\secpara)$ and $\dk_f \gets \Issue(\msk,f)$.
        \item Sample a uniform $j \in \set{1,2,\dots,T}$ and initialize a set $\cT \gets \emptyset$.
        \item Run $\adv$ with input $\dk_f$. Whenever $\adv$ makes a generation query on prompt $\prompt$, run $x \gets \Classifier(\Model(\prompt))$. If $f(x) = 1$, compute and return $\out \gets \Generate(\msk, \prompt)$. If $f(x) = 0$, proceed as follows:
        \begin{itemize}
            \item If the special value $\wt{x}$ is still unset and if $x \not\in\cT$, update $\cT \gets \cT \cup \set{x}$. If $|\cT| = j$, set a special value $\wt{x} \gets x$.
            \item If $\wt{x} = x$, query the oracle given in the adaptive robustness security game of $\PRC$ on input $\prompt$ to obtain a codeword $c$. Return $\out \gets \Model(\prompt; c)$.\\
            Otherwise, compute and return $\out \gets \Generate(\msk,\prompt)$.
        \end{itemize} 
        \item At some point, $\adv$ outputs $\out^*$.
        \item Abort if $\Classifier(\out^*) \neq \wt{x}$.
        \item Output $c^* \gets \RandRecover(\out^*)$.
    \end{itemize}
    Notice that $\cB$ produces $\out^*$ the same way as $\adv$ does in the previous security game because the queries are handled identically.
    In addition, $\cB$ is efficient because it runs the efficient adversary $\adv$ and additional efficient algorithms. 
    
    Let $p_3$ be the probability that $\cB$ breaks the adaptive robustness of $\PRC$. We have
    \begin{align*}
        p_3 \ge & \Pr\left[ \begin{matrix}
            \exists\; c \in \cS_{\mathsf{PRC}} \text{ with } \dist(c, c^*) < 2\delta\\
            \PRC.\Decode(\sk, c^*) = 0
        \end{matrix} \right]\\
        \ge & \Pr\left[ \begin{matrix}
            \exists\; \out \in \cS \text{ with } \dist(\out, \out^*) < \delta\\
            \PRC.\Decode(\sk, c^*) = 0
        \end{matrix} \right] - \negl(\secpara)\\
        \ge & \Pr\left[ \begin{matrix}
            \wt{x} = x^* \\
            \exists\; \out \in \cS \text{ with } \dist(\out, \out^*) < \delta\\
            \PRC.\Decode(\sk, c^*) = 0
        \end{matrix} \right] - \negl(\secpara)\\
        = & \;p_2 - \negl(\secpara),
    \end{align*}
    where the first inequality follows from the robust randomness recovery property, \ie $\dist(\out, \out^*) < \delta$ implies $\dist(c, c^*) < 2\delta$ except with negligible probability, where $c$ is the first $n$ bits of the randomness used by $\Model$ to generate $\out$. The second inequality is due to the fact that the probability does not increase under the intersection of events. The equality is by the definition of $p_2$. By the adaptive robustness of $\PRC$ applied to $\cB$, we have $p_3 \le \negl(\secpara)$, and therefore $p_2 \le \negl(\secpara)$. 
    \end{itemize}
    Combining all of the above, we conclude that
    $p_0 \le T \cdot (\abs{p_1-p_2} + p_2) + \negl(\secpara) \le \negl(\secpara)$.
\end{proof}

\begin{lemma}
\label{lemma:undetectability}
    \cref{construction} satisfies undetectability (\cref{def:abundetectability}).
\end{lemma}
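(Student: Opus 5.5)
The plan is a hybrid argument that moves from $\mathsf{Env}(1^\secpara,f)$ to $\mathsf{Env}'(1^\secpara,f)$ one unauthorized attribute at a time. The structure follows the proof of \cref{lemma:adaptive-robustness}.

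The first step is to rewrite the generation oracle in $\mathsf{Env}$ into the same case split that $\mathsf{Env}'$ uses. On a query $\prompt$, both environments first sample $\out'\gets\Model(\prompt)$ and $x\gets\Classifier(\out')$. If $f(x)=1$, both return $\Generate(\msk,\prompt)$. The two outputs then have identical distributions up to the classifier's attribute: by $\eta$-consistency with $\eta=1-\negl$, the attribute computed inside $\Generate$ agrees with $x$ except with negligible probability. The two environments therefore differ only on queries with $f(x)=0$. There $\mathsf{Env}$ returns $\Model(\prompt;c)$ with $c\gets\PRC.\Enc(s_x)$ and $s_x=\PRC.\Gen(1^\secpara;\PRG(\CPRF.\Eval(\msk,x)))$, while $\mathsf{Env}'$ returns a fresh $\Model(\prompt)$.

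Next, let $T=\poly(\secpara)$ bound the number of queries, and order the distinct unauthorized attributes $x_1,x_2,\dots$ by first appearance. Hybrid $H_i$ answers queries whose unauthorized attribute is among $x_1,\dots,x_i$ with fresh model outputs, and answers all other queries as in $\mathsf{Env}$. Then $H_0\approx\mathsf{Env}$ and $H_T=\mathsf{Env}'$.

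To bound $|H_{i-1}-H_i|$, I would pass through two intermediate games.

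\textbf{(a)} Replace $s_{x_i}$ with an independent key $\sk\gets\PRC.\Gen(1^\secpara)$. This is justified by $\CPRF$ security: the reduction obtains $\dk_f$ through the single constrain query, and it uses pre-challenge $\Eval$ queries on the other attributes $x_j$ it needs to simulate. The value at $x_i$ is requested as the challenge, which is allowed because $f(x_i)=0$ and $x_i\notin Q$. The reduction must guess $i$'s attribute online, or equivalently it can issue the challenge at the moment $x_i$ first appears. The challenge is then replaced by uniform randomness, and $\PRG$ security turns $\PRG(\cdot)$ into truly uniform coins for $\PRC.\Gen$.

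\textbf{(b)} Replace each codeword $\PRC.\Enc(\sk)$ with a uniform $c\in\zo^n$, using $\PRC$ pseudorandomness. The reduction forwards encoding-oracle queries for $x_i$ and simulates everything else with its own $\msk$. Finally I will use the fact that $\Model(\prompt;c)$ with uniform $c$ is distributed exactly as $\Model(\prompt)$, where the remaining coins beyond the first $n$ are fresh. This is the resampling of the output.

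A subtle point is that in $\mathsf{Env}'$ the returned unauthorized output is the same sample $\out'$ whose attribute was tested. In the hybrid it is a fresh sample. I would handle this by conditioning on the classifier value, or by noting that returning $\out'$ itself is distributed as $\Model(\prompt)$ conditioned on $f(\Classifier)=0$. The hybrids should be set up so that these match. I expect this distribution-matching step, together with making the adaptive $\CPRF$ reduction for the unknown challenge attribute well-defined, to be the main obstacle. For the reduction, I would first try the guessing approach: embed the challenge at the $i$-th newly seen unauthorized attribute, exactly as the index $j$ is used in \cref{lemma:adaptive-robustness}.

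Summing over at most $T$ hybrids, each with negligible gap, gives a total advantage of $T\cdot\negl(\secpara)=\negl(\secpara)$.
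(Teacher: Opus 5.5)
\textbf{Verdict: genuine gap at the last step.} Your chain from $H_0$ to $H_T$ uses the same ingredients as the paper's hops $p_0\to p_3$: CPRF security, then $\PRG$, then $\PRC$ pseudorandomness, then ``uniform coins give a fresh model sample.'' Running it one unauthorized attribute at a time is, if anything, more careful than the paper. The paper applies a single-challenge CPRF and single-key PRC pseudorandomness to all unauthorized attributes at once.

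The gap is the step you defer as the ``main obstacle.'' The claim $H_T=\mathsf{Env}'$ is false, and neither of your suggested remedies closes the difference as stated. In $H_T$ an unauthorized query returns a fresh $\out\gets\Model(\prompt)$, independent of the sample $\out'$ that decides the branch. That is an \emph{unconditioned} sample. $\mathsf{Env}'$ instead returns $\out'$ itself, a sample conditioned on $f(\Classifier(\out'))=0$. No choice of hybrids makes these equal. In the real $\mathsf{Env}$ the returned output's coins are independent of the branching sample, so once the watermark is stripped you necessarily hold an unconditioned sample.

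The missing ingredient is $\eta$-consistency applied to this branch; you invoke it only for the authorized branch. The argument goes as follows:
\begin{itemize}
\item Add an unused independent sample $\out$ to $\mathsf{Env}'$. By exchangeability of $(\out',\out)$, $\mathsf{Env}'$ is identical to the game that branches on $f(\Classifier(\out))$ and returns $\out$ in the unauthorized case.
\item $H_T$ branches on $f(\Classifier(\out'))$ and returns $\out$.
\item The two games agree unless $\Classifier(\out)\neq\Classifier(\out')$. This has probability at most $1-\eta$ per query, conditioned on the history.
\item The total gap is therefore at most $T(1-\eta)=\negl(\secpara)$.
\end{itemize}
This is the paper's $p_3\to p_4$ step. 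Equivalently, consistency forces $\Classifier(\Model(\prompt))$ to take a single value with probability at least $\eta$, which is exactly what would make ``conditioning on the classifier value'' harmless.

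A smaller point concerns your CPRF reduction. You say the other attributes are handled by ``pre-challenge'' $\Eval$ queries, but that does not match your own hybrid order:
\begin{itemize}
\item With your forward ordering, the attributes that still need real evaluations in the step for $x_i$ are $x_{i+1},x_{i+2},\dots$. These first appear after the challenge.
\item In any ordering, all such queries come after the constrain query, since $\adv$ receives $\dk_f$ before issuing any prompt.
\end{itemize}
\cref{CPRFdef} as written places all evaluation queries before the constrain query and allows none after the challenge. You should therefore state explicitly that you rely on the standard adaptive notion, in which evaluation queries may be interleaved with the constrain query. Reversing your hybrid order removes the post-challenge queries: use real values on $x_1,\dots,x_i$ and fresh outputs on later attributes, caching each evaluation at first appearance. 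The paper's own proof tacitly needs the same strengthening. So this is something to make explicit, not a defect specific to your approach.
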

\begin{proof}
Let $f\in\cF$ be any policy and $\adv$ be any probabilistic polynomial-time adversary. Consider the following probabilities.
\begin{itemize}
    \item $p_0$ is the probability that $\adv^{\mathsf{Env}(1^\secpara,f)}$ outputs $1$. Recall that upon being queried a prompt $\pi$, $\mathsf{Env}(1^\secpara,f)$ will respond with an output computed as $\out \gets \Generate(\msk, \prompt)$. By construction, $\Generate(\msk,\prompt)$ performs the following:
    \begin{itemize}
        \item $x \gets \Classifier(\Model(\pi))$
        \item $r_x \gets \CPRF.\Eval(\msk,x)$
        \item $s_x \gets \PRC.\Gen(1^\secpara;\PRG(r_x))$
        \item $c \gets \PRC.\Enc(s_x,m_0)$
        \item $\out \gets \Model(\prompt;c)$
    \end{itemize}
    
    \item $p_1$ is the probability that $\adv^{\mathsf{Env}_1(1^\secpara,f)}$ outputs $1$, where we define $\mathsf{Env}_1(1^\secpara,f)$ by modifying $\mathsf{Env}(1^\secpara,f)$ in the following manner. Upon being queried a prompt $\pi$, it additionally computes $x \gets \Classifier(\Model(\pi))$ and checks if $f(x) = 0$. If so, it would proceed differently as follows:
    \begin{itemize}
        \item Maintain a set of pairs $\cP \subseteq \cX \times \cY$ initialized as an empty set $\cP \gets \emptyset$.
        \item If there is an entry $(x,y) \in \cP$, then set $r_x = y$. Otherwise, sample a uniformly random element $y \gets \cY$, set $r_x = y$, and update $\cP \gets \cP \cup \set{(x,y)}$.
        \item $s_x \gets \PRC.\Gen(1^\secpara;\PRG(r_x))$
        \item $c \gets \PRC.\Enc(s_x,m_0)$
        \item $\out \gets \Model(\prompt;c)$
    \end{itemize}
    Since $f(x) = 0$, by the security of CPRF, the value $\CPRF.\Eval(\msk,x)$ is indistinguishable from a random element against an adversary who holds $\dk_f \gets \Issue(\msk, f) = \CPRF.\Constrain(\msk,f)$. Therefore, $\abs{p_0 - p_1} \le \negl(\secpara).$
    
    \item $p_2$ is the probability that $\adv^{\mathsf{Env}_2(1^\secpara,f)}$ outputs $1$, where we define $\mathsf{Env}_2(1^\secpara,f)$ by modifying $\mathsf{Env}_1(1^\secpara,f)$ in the following manner. Upon being queried a prompt $\pi$, it additionally computes $x \gets \Classifier(\Model(\pi))$ and checks if $f(x) = 0$. If so, it would proceed differently as follows:
    \begin{itemize}
        \item Maintain a set of pairs $\cP' \subseteq \cX \times \zo^{\poly(\secpara)}$ initialized as an empty set $\cP' \gets \emptyset$.
        \item If there is an entry $(x,z) \in \cP'$, then set $s_x = z$. Otherwise, sample $z \gets \PRC.\Gen(1^\kappa)$, set $s_x = z$, and update $\cP' \gets \cP' \cup \set{(x,z)}$.
        \item $c \gets \PRC.\Enc(s_x,m_0)$
        \item $\out \gets \Model(\prompt;c)$
    \end{itemize}
    By the pseudorandomness of $\PRG$, the output of $\PRG$ under a random seed is computationally indistinguishable from a random element, so $\abs{p_1 - p_2} \le \negl(\secpara)$.
    
    \item $p_3$ is the probability that $\adv^{\mathsf{Env}_3(1^\secpara,f)}$ outputs $1$, where we define $\mathsf{Env}_3(1^\secpara,f)$ by modifying $\mathsf{Env}_2(1^\secpara,f)$ in the following manner. Upon being queried a prompt $\pi$, it additionally computes $\out' \gets \Model(\pi)$, $x \gets \Classifier(\out')$ and checks if $f(x) = 0$. If so, it would proceed differently as follows:
    \begin{itemize}
        \item Sample a fresh random string $c$, and set $\out \gets \Model(\prompt;c)$
    \end{itemize}
    By the pseudorandomness of $\PRC$, a random output of $\PRC$ under a randomly fixed secret key is computationally indistinguishable from a fresh random string. Therefore, $\abs{p_2 - p_3} \le \negl(\secpara)$.

    \item $p_4$ is the probability that $\adv^{\mathsf{Env}'(1^\secpara,f)}$ outputs $1$. Notice that the only difference between $\mathsf{Env}_3(1^\secpara,f)$ and  $\mathsf{Env}'(1^\secpara,f)$ is that when $f(x)=0$, the former samples a new $\out \gets \Model(\prompt)$, while the latter sets $\out \gets \out'$, where $\out'$ was also sampled from $\Model(\prompt)$. We see that $\out,\out'$ follow the same distribution. In computing $p_3$ and $p_4$, these outputs were additionally used to compute the attributes. Hence, the only scenario where $\mathsf{Env}_3(1^\secpara,f)$ and $\mathsf{Env}'(1^\secpara,f)$ differ is when the attributes are different. Since the classifier is $\eta$-consistent on $\Model$, we have
    \[\abs{p_3 - p_4} \le \Pr_{\out,\out' \gets \Model(\prompt)}[\Classifier(\out) \neq \Classifier(\out')] \le 1 - \eta \le \negl(\secpara)\]
\end{itemize}
By the triangle inequality, we conclude that 
\[\left|
\Pr\left[1 \gets \adv^{\mathsf{Env}(1^\secpara,f)}\right]-\Pr\left[1 \gets \adv^{\mathsf{Env}'(1^\secpara,f)}\right]\right| = \abs{p_0 - p_4} \le \negl(\secpara).\]
\end{proof}

\begin{lemma}
\label{lemma:soundness}
    \cref{construction} satisfies soundness (\cref{def:soundness}).
\end{lemma}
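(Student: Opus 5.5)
We prove soundness by a hybrid argument in the style of the proof of \cref{lemma:adaptive-robustness}. The adversary makes no generation queries, so the only information it holds is $\dk_f = \CPRF.\Constrain(\msk,f)$. We will guess the attribute of its output, replace the corresponding PRC key with a freshly sampled one, and then invoke PRC soundness.

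Fix $f$ and a probabilistic polynomial-time $\adv$ running in time $T=\poly(\secpara)$, and let $p_0$ be the probability that $\frakG^{\mathsf{Sound}}_{\secpara,f,\cW}(\adv)$ outputs $1$. The game outputs $1$ when $x^* := \Classifier(\out^*)$ satisfies $f(x^*)=0$ and $\PRC.\Dec(s_{x^*}, \RandRecover(\out^*))=1$, where $s_{x^*} = \PRC.\Gen(1^\secpara;\PRG(\CPRF.\Eval(\msk,x^*)))$.

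The hybrids proceed as follows.

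\begin{enumerate}
\item \emph{Hybrid 1.} The difficulty is that $x^*$ is chosen adaptively, after $\adv$ sees $\dk_f$, while the CPRF security game asks for a single challenge input $x^*$ with $C(x^*)=0$. This fits the adaptive challenge in \cref{CPRFdef} directly: the constraint query comes first, and the challenge $x^*$ is chosen afterwards. So no guessing is actually needed. We build a reduction that forwards $\dk_f$, receives $\out^*$, computes $x^*$, aborts if $f(x^*)=1$, and otherwise submits $x^*$ as its challenge with no pre-challenge queries (so $Q=\emptyset$). On receiving $y^*$, it computes $s=\PRC.\Gen(1^\secpara;\PRG(y^*))$ and outputs $\PRC.\Dec(s,\RandRecover(\out^*))$. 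With $b=0$ this simulates the real game exactly. With $b=1$ it yields a hybrid where $r_{x^*}$ is uniform, giving $|p_0-p_1|\le\negl(\secpara)$.

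\item \emph{Hybrid 2.} Next we replace $\PRG(r_{x^*})$ by uniform randomness, so that $s$ is distributed as $\PRC.\Gen(1^\secpara)$. Since $r_{x^*}$ is now a uniform seed independent of everything else, $|p_1-p_2|\le\negl(\secpara)$ by PRG security. The reduction samples $\msk$ itself and runs $\adv$.

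\item \emph{Bounding $p_2$.} In Hybrid 2, $\sk\gets\PRC.\Gen(1^\secpara)$ is sampled independently of $\msk$, $\dk_f$ and $\out^*$. Hence the word $c^*=\RandRecover(\out^*)$ is independent of $\sk$. Conditioning on $c^*$ and applying PRC soundness, which is stated for any fixed word, gives $\Pr[\PRC.\Dec(\sk,c^*)=1]\le\negl(\secpara)$. Averaging over $c^*$ gives $p_2\le\negl(\secpara)$.
\end{enumerate}

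Combining the three steps gives $p_0\le\negl(\secpara)$.

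The main obstacle is the ordering in Hybrid 1. In Hybrid 2, $\sk$ must be sampled after $c^*$ is fixed, and it must be the PRC key actually used for decoding. This ordering is legitimate only because $r_{x^*}$ is an adaptive CPRF challenge revealed after $\out^*$ is produced. If a selective CPRF were used instead, I would fall back to guessing $x^*$ among polynomially many candidates, which does not work for exponentially large $\cX$. I would therefore rely explicitly on the adaptive challenge phase of \cref{CPRFdef}.
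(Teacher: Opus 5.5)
Your proof is correct and matches the paper's argument step for step: first replace $\CPRF.\Eval(\msk,x^*)$ by a uniform value using CPRF security, then replace $\PRG(r^*)$ by true randomness so that the decoding key is a fresh PRC key, and finally apply PRC soundness to the word $c^*=\RandRecover(\out^*)$, which is independent of that key. Two remarks: you spell out that the adaptively chosen $x^*$ fits the post-constraint challenge phase of the CPRF definition, which the paper uses only implicitly; and the overview sentence announcing that you will guess the attribute is superfluous, since your argument (correctly) never guesses.
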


\begin{proof}
Let $f\in\cF$ be any policy and $\adv$ be any probabilistic polynomial-time adversary. Consider the following probabilities.
\begin{itemize}
    \item Let $p_0$ be the probability that $\adv$ wins
    the security game $\mathfrak{G}^{\mathsf{Sound}}_{\secpara,f,\cW}$, in which $\adv$ produces an output $\out^*$ and one computes $x^* \gets \Classifier(\out^*),\; c^* \gets \RandRecover(\out^*)$. $\adv$ wins if $f(x^*)=0$ and $\MasterDetect(\msk,\out^*)=1$. Recall from construction that the latter condition for $\MasterDetect$ is equivalent to computing
    \begin{itemize}
        \item $r_{x^*}\gets \CPRF.\Eval(\msk,x^*)$
        \item $s_{x^*}\gets \PRC.\Gen(1^\secpara;\PRG(r_{x^*}))$
    \end{itemize}
    and checking if $\PRC.\Dec(s_{x^*},c^*)=1$.
    
    \item Let $p_1$ be defined similarly as $p_0$, except that in the final $\MasterDetect$ computation, if $f(x^*)=0$, we replace $r_{x^*} \gets \CPRF.\Eval(\msk,x^*)$ with a uniformly random value $r^*$ in the range of $\CPRF$. By the security of the constrained PRF, from the view of an adversary $\adv$ who only receives $\dk_f \gets \Issue(\msk,f) = \CPRF.\Constrain(\msk,f)$, it cannot distinguish $r_{x^*}$ from a random value when $f(x^*)=0$, so $|p_0-p_1| \le \negl(\secpara)$.

    \item Let $p_2$ be defined similarly as $p_1$, except that in the final $\MasterDetect$ computation, if $f(x^*)=0$, we replace $s_{x^*} \gets \PRC.\Gen(1^\secpara;\PRG(r^*))$ with a freshly sampled PRC key  $s^* \gets \PRC.\Gen(1^\secpara)$. By the pseudorandomness of $\PRG$, $|p_1-p_2| \le \negl(\secpara).$ Since $s^*$ is a fresh PRC key, by the soundness of $\PRC$, for any fixed $c$, 
    \[\underset{s^*\gets \PRC.\Gen(1^\secpara)}{\Pr}\left[\PRC.\Dec(s^*,c)=1\right]\le \negl(\secpara).\] Applying this to $c^*$, we have $p_2 \le \negl(\secpara).$
\end{itemize}

Combining the above, we obtain
$
p_0
\le |p_0-p_1|+|p_1-p_2|+p_2
\le \negl(\secpara)
$.
\end{proof}

\section{Experimental Details}
\label{sec:experimental-details}

\begin{table*}[ht]
\label{single-label-prompts}
\caption{Prompts used for the single-label benchmark (\cref{tab:new-single-label-matrix}), with sample output snippets. $\kappa=100,\rho=3$.}
\footnotesize
\centering
\renewcommand{\arraystretch}{1.2}
\setlength{\tabcolsep}{4pt}

\begin{tabularx}{\textwidth}{XXXc}
\hline
\textbf{Prompt} & \textbf{Unwatermarked Generation} & \textbf{Watermarked Generation} & \textbf{Labels} \\
\hline
Explain how stem cell therapy is being used in regenerative medicine. &
Stem cell therapy is a rapidly evolving field in regenerative medicine, where stem cells are used to repair, 
replace, or regenerate damaged or diseased tissues and organs. Here's an overview of how stem cell therapy is being
used in regenerative medicine: [...truncated] &
Stem cell therapy, also known as stem cell treatment or regenerative medicine, is a type of medical treatment that 
uses stem cells to repair or replace damaged or diseased tissues and organs. Here's an overview of how [...truncated] &
medicine \\
\hline
Explain the economic effects of raising the minimum wage on employment and businesses. &
Raising the minimum wage can have both positive and negative economic effects on employment and businesses. Here 
are some of the key effects:

**Positive effects:**

1. **Increased consumer spending**: Higher minimum wages can lead to increased consumer spending, [...truncated] &
Raising the minimum wage can have both positive and negative economic effects on employment and businesses. Here 
are some of the key effects:

**Positive effects:**

1. **Increased wages for low-income workers**: Raising the minimum wage can lead to higher wages [...truncated] &
economics \\
\hline
Explain how Surrealist artists used dream imagery to challenge reality and logic. &
Surrealist artists, a group of 20th-century artists who sought to challenge reality and logic, used dream imagery as a primary 
tool to achieve this goal. By tapping into the subconscious mind and exploring the realm of the irrational, they created [...truncated] &
Surrealist artists, a group of French artists led by André Breton, used dream imagery to challenge reality and logic in their  
works, particularly in the 1920s and 1930s. This movement was characterized by the exploration of the subconscious mind, [...truncated] &
art \\
\hline
Break down the most influential software breakthroughs in history. &
Here's a breakdown of some of the most influential software breakthroughs in history:

1. **The Development of the First Computer (1936)**: Charles Babbage's Analytical Engine, designed in the 19th century, was the
first mechanical computer.  [...truncated] &
There have been numerous influential software breakthroughs throughout history. Here's a breakdown of some of the most
significant ones:

1. **Computer Algorithms (1950s-1960s)**: The development of algorithms like the Quick Sort, Merge Sort, and Binary Search [...truncated] &
software \\
\hline
Explain the role of strategy and teamwork in achieving success in sports. &
Strategy and teamwork are crucial components of success in sports, as they enable athletes to work together, adapt to different
game situations, and overcome obstacles to achieve their goals. [...truncated] &
Strategy and teamwork are crucial elements in achieving success in sports. Both are essential components that work together to 
drive performance, build relationships, and foster a winning culture. [...truncated] &
sports \\
\hline
\end{tabularx}

\end{table*}

\begin{table*}[ht]
\label{multi-label-prompts}
\caption{Prompts used for the multiple-label benchmark (\cref{tab:new-multi-label-matrix}), with sample output snippets. $\kappa=100,\rho=3$.}
\footnotesize
\centering
\renewcommand{\arraystretch}{1.2}
\setlength{\tabcolsep}{4pt}

\begin{tabularx}{\textwidth}{XXXc}
\hline
\textbf{[ID] Prompt} & \textbf{Unwatermarked Generation} & \textbf{Watermarked Generation} & \textbf{Labels} \\
\hline
\texttt{[sports\_econ]} Explain the economic nuance and impact of Drake Maye during his college football career at North Carolina. &
Drake Maye is a standout quarterback in college football, playing for the University of North Carolina Tar Heels. 
His economic nuance and impact on the team are multifaceted and significant.

**Economic Nuance:**

1. **Financial Aid:** Maye's family is considered to be of modest means, and he received a significant amount of 
financial aid to help cover the costs of attending the University of North Carolina. [...truncated] &
Drake Maye is a college football player who currently plays as a quarterback for the North Carolina Hurricanes. 
During his college career, Maye was a key player and a rising star in the ACC. Here's an analysis of the economic 
nuance and impact of his career:

**Economic Nuance:**

1. **Marketing and Branding**: Maye's college football career generated significant attention and media coverage, 
which led to increased brand awareness and exposure for the Carolina brand. [...truncated] &
economics, sports \\
\hline
\texttt{[art\_software]} Explain how software has transformed the art world. &
Software has significantly transformed the art world in various ways, revolutionizing the way artists create, 
exhibit, and interact with their work. Here are some key ways software has impacted the art world:

1. **Digital Art Creation**: Software has enabled artists to create digital artworks that can be easily shared, 
printed, and displayed. Digital art has opened up new possibilities for artists to experiment with different 
styles, techniques, and mediums. [...truncated] &
Software has revolutionized the art world in several ways, transforming the way artists create, collaborate, and 
exhibit their work. Here are some key ways software has impacted the art world:

1. **Digital Artistic Tools**: Software has enabled artists to create digital art, which can be easily shared and 
exhibited. Tools like Adobe Photoshop, Corel Painter, and Clip Studio Paint allow artists to manipulate and edit 
digital images, creating new and innovative forms of art. [...truncated] &
art, software \\
\hline
\texttt{[medicine\_software]} Explain how software has transformed the practice of medicine. &
Software has revolutionized the practice of medicine in numerous ways, transforming the way healthcare 
professionals approach diagnosis, treatment, and patient care. Here are some key ways software has impacted the 
practice of medicine: [...truncated] &
Software has revolutionized the practice of medicine in numerous ways, transforming the way healthcare 
professionals diagnose, treat, and manage patients. Here are some key examples: [...truncated] &
medicine, software \\
\hline
\end{tabularx}
\end{table*}

\paragraph{Sources of error and other limitations:}
\begin{itemize}
    \item \textbf{Classification}. The MNLI model will randomly misclassify the generated text significantly, which results in an error since the result of CPRF is affected. Using a different implementation of classification may be able to easily improve this prototype.
    \item \textbf{RandRecovery}. When token sampling has low entropy during watermarked generation, it reduces the opportunity for the provided randomness to be encoded, later reducing recovery accuracy. Additionally, since the RandRecovery implementation uses the non-special token index of each token in the text to recreate the relevant partitions and reclaim the randomness, it must identify retokenization discrepancies, and throw away more random bits to re-align itself, further impacting recovery success. This also makes the RandRecovery implementation extremely susceptible to deletion or insertion attacks.
    \item \textbf{PRC}. As shown in \cref{fig:newfpr}, the false positive rate of the scheme directly follows that of the bare PRC attempting to decode a random, unrelated sequence. Using PRC implementations with better false positive rates will directly improve the FPR of the watermarking scheme. Our theoretical results model the PRC as an abstract primitive satisfying pseudorandomness, soundness, and adaptive robustness. These guarantees should be distinguished from the security of any particular concrete implementation. Recent work by \cite{wang2025cryptanalysis} presents cryptanalytic attacks against certain LDPC-based PRC instantiations, including attacks that exploit their concrete algebraic and decoding structure. Thus, while these results do not contradict our generic construction, they highlight that a practical deployment must instantiate the scheme with a PRC whose concrete security has been carefully evaluated.
    \item \textbf{Prompt variety}. Benchmarks were repeatedly ran on a few arbitrarily picked demonstrative prompts to show feasibility of the watermarking scheme.
    \item \textbf{Sample size}. Due to computational limitations, all benchmarks were only ran with sample sizes of 100-500. Access to more computing would enable higher confidence analysis of behavior.

\end{itemize}

\end{document}